\numberwithin{equation}{section}
\newcommand{\R}{{\mathbb R}}
\newcommand{\Z}{{\mathbb Z}}
\newcommand{\re}{{\rm Re}\,}
\newcommand{\bigpare}[1]{\bigl(#1\bigr)}
\newcommand{\biggpare}[1]{\biggl(#1\biggr)}
\newcommand{\biggbra}[1]{\biggl\{#1\biggr\}}
\newcommand{\bigset}[2]{\bigl\{#1\bigm|#2\bigr\}}
\newcommand{\norm}[1]{\| #1 \|}
\newcommand{\jap}[1]{\langle #1 \rangle}
\def\a{\alpha}
\def\b{\beta}
\def\c{\gamma}
\def\d{\delta}
\def\e{\varepsilon}
\def\f{\varphi}
\def\g{\psi}
\def\k{\kappa}
\def\l{\lambda}
\def\m{\mu}
\def\n{\nu}
\def\s{\sigma}
\newcommand{\C}{\Gamma}
\newcommand{\F}{\Phi}
\newcommand{\G}{\Psi}
\renewcommand{\O}{\Omega}
\def\re{\mathbb{R}}
\def\ze{\mathbb{Z}}
\def\pa{\partial}
\newtheorem{thm}{Theorem}[section]
\newtheorem{lem}[thm]{Lemma}
\newtheorem{Le}{Lemma}[section]
\newtheorem{Pro}{Proposition}[section]
\theoremstyle{definition}
\newtheorem{ass}{Assumption}
\theoremstyle{remark}
\newtheorem{rem}[thm]{Remark}
\title{Lifshitz Tails for Generalized Alloy Type \\
  Random Schr{\"o}dinger Operators}
\author{Fr{\'e}d{\'e}ric KLOPP\footnote{%
    LAGA, U.M.R. 7539 C.N.R.S, Institut Galil{\'e}e, Universit{\'e} de
    Paris-Nord, 99 Avenue J.-B.  Cl{\'e}ment, F-93430 Villetaneuse,
    France\ et \ Institut Universitaire de France.  Email:
    \href{mailto:klopp@math.univ-paris13.fr}{\tt
      klopp@math.univ-paris13.fr}} \text{ } and Shu
  NAKAMURA\footnote{%
    Graduate School of Mathematical Sciences, University of Tokyo,
    3-8-1 Komaba, Meguro-ku, Tokyo, Japan 153-8914.  Email:
    \href{mailto:shu@ms.u-tokyo.ac.jp}{\tt shu@ms.u-tokyo.ac.jp} }}
\date{}
\begin{document}
\maketitle
\begin{abstract}
  We study Lifshitz tails for random Schr{\"o}dinger operators where the
  random potential is alloy type in the sense that the single site
  potentials are independent, identically distributed, but they may
  have various function forms.  We suppose the single site potentials
  are distributed in a finite set of functions, and we show that under
  suitable symmetry conditions, they have Lifshitz tail at the bottom
  of the spectrum except for special cases. When the single site
  potential is symmetric with respect to all the axes, we give a
  necessary and sufficient condition for the existence of Lifshitz
  tails. As an application, we show that certain random displacement
  models have Lifshitz singularity at the bottom of the spectrum, and
  also complete the study of continuous Anderson type models
  undertaken in~\cite{KN1}.
\end{abstract}


\section{Introduction}
\label{sec:introduction}
Consider the continuous alloy type (or Anderson) random Schr{\"o}dinger
operator:
\begin{equation}
  \label{eq:5}
  H_\omega=-\Delta+V_0+V_\omega\text{ where
  }V_\omega(x)=\sum_{\gamma\in\Z^d}\omega_\gamma V(x-\gamma)
\end{equation}
on $\R^d$, $d\geq 1$, where
\begin{itemize}
\item $V_0$ is a periodic potential;
\item $V$ is a compactly supported single site potential;
\item $(\omega_\gamma)_{\gamma\in\Z^d}$ are independent identically
  distributed random coupling constants.
\end{itemize}
Let $\Sigma$ be the almost sure spectrum of $H_\omega$ and
$E_-=\inf\Sigma$. When $V$ has a fixed sign, it is well known that the
$E_-=\inf(\sigma(-\Delta+V_{\overline{b}}))$ if $V\leq0$ and
$E_-=\inf(\sigma(-\Delta+V_{\overline{a}}))$ if $V\geq0$. Here,
$\overline{x}$ is the constant vector
$\overline{x}=(x)_{\gamma\in\Z^d}$.\\
Moreover, for $E$ a real energy, one defines the {\it integrated
  density of states} by
\begin{equation}
  \label{eq:6}
  N(E)=\lim_{L\to+\infty}\frac{\#\{\text{eigenvalues of
    }H_{\omega,L}^N\ \leq E\}}{L^d}
\end{equation}
where
\begin{equation}
  \label{eq:9}
  H_{\omega,L}^N =-\triangle +V_0+V_\omega \qquad \text{on } L^2(C_L(0))
\end{equation}
with Neumann boundary conditions, 
where $C_L(0)$ is defined by \eqref{eq:12}. 
It is well-known that $N(E)$ exists
and is non-random, i.e., $N(E)$ is independent of $\omega$, almost
surely; it has been the object of a lot of studies.\\
In particular, it is well known that the integrated density of states
of the Hamiltonian admits a Lifshitz tail near $E_-$, i.e.,
\begin{equation*}
  \lim_{E\to E_-^+}\frac{\log|\log N(E)|}{\log(E-E_-)}<0.
\end{equation*}
Actually, the limit can often be computed and in many cases is equal
to $-d/2$; we refer to~\cite{CL, K1, K2, FP,MR1935594,V1, V2} for
extensive reviews and more precise statements.
\vskip.2cm In the present paper, we mainly consider a generalized
Bernoulli alloy type model that we define below: we allow the single
site potential to have various function forms (with a discrete
distribution). We give a necessary and sufficient condition to have
Lifshitz tail under a symmetry assumption on the single site
potentials.  The results we obtain are then applied to the random
displacement models studied recently by Baker, Loss and Stolz
(\cite{BLS1, BLS2}), and also to complete the study of the occurrence
of Lifshitz tails for alloy type models initiated in~\cite{KN1}.
\subsection{The model}
\label{sec:model}
Let us now describe our model. We let $d\geq 1$ and we
study operators on $\mathcal{H}=L^2(\re^d)$.  We denote
\begin{equation}
  \label{eq:12}
  C_\ell(x)=\bigset{y\in\re^d}{0\leq y_j-x_j\leq \ell, j=1,\dots,d}
\end{equation}
be the cube with the size $\ell>0$ and $x$ as a corner. Let $V_0\in
C^0(\re^d)$ be a background potential, which is periodic with respect
to $\ze^d$.

Let $v_k\in C_c^0(C_1(0))$, $k=1,\dots, M$, be single site potentials
where $M\in\mathbb{N}$. We consider the random Schr{\"o}dinger operator:
\begin{equation*}
H_\omega = -\triangle +V_0 +V_w \qquad \text{on } \mathcal{H}=L^2(\re^d),
\end{equation*}
where
\begin{equation*}
V_\omega(x) = \sum_{\c\in\ze^d} v_{\omega(\c)}(x-\c)
\end{equation*}
is the random potential and $\bigset{\omega(\c)}{\c\in\ze^d}$ are
independent identically distributed (i.i.d.) random variables with
values in $\{1,\dots, M\}$.

To fix ideas, let us assume
\begin{equation}
  \label{eq:1}
  \inf \s(H_\omega) =0, \quad \text{a.s. }\omega  
\end{equation}
which can always be achieved by shifting $V_0$ by a constant.

We denote
\begin{equation*}
H_k^N = -\triangle + V_0 + v_k \quad \text{on } L^2(C_1(0))
\end{equation*}
with Neumann boundary conditions on the boundary $\pa C_1(0)$.

Define
\begin{ass}
  (1) $V_0$ is symmetric about the plane $\bigset{x}{x_d=1/2}$.\\
  (2) There exists $m\in\{1,\dots,M\}$ such that
  \begin{equation*}
  \inf \s(H_k^N)=0 \qquad\text{for }k=1,\dots, m,
  \end{equation*}
  and
  \begin{equation*}
  \inf \s(H_k^N)>0 \qquad \text{for } k>m.
  \end{equation*}
  (3) Moreover, for $k=1,\dots, m$, $v_k(x)$ is symmetric about
  $\{x_d=1/2\}$.
\end{ass}
\begin{rem}
Note that in this assumption, we only require symmetry with
respect to a single coordinate hyperplane that we chose to be the
$d$-th one.

If one assumes that $V_0$ and the $(v_k)_{1\leq k\leq M}$ are
reflection symmetric with respect to all the coordinate planes (see
e.g.~\cite{BLS1,BLS2,KN1}), the standard characterization of the
almost sure spectrum (see e.g.~\cite{FP,K1}) and lower bounding
$H_\omega$ by the direct sum of its Neumann restrictions to the cubes
$(C_1(\gamma))_{\gamma\in\ze^d}$ show that, as a consequence
of~\eqref{eq:1}, one obtains
\begin{itemize}
\item for all $m\in\{1,\dots,M\}$, $\inf \s(H_k^N)\geq0$;
\item there exists $m\in\{1,\dots,M\}$ such that $\inf \s(H_m^N)=0$.
\end{itemize}
\end{rem}
\subsection{The results}
\label{sec:results}
We study the Lifshitz singularity for the integrated
density of states (IDS) at the zero energy. Recall that the IDS is
defined by~\eqref{eq:6}\\
We first consider a relatively easy case:
\begin{thm}
  \label{th:1}
  Suppose Assumption~A with $m<M$. Then
  \begin{equation}\label{LT}
    \limsup_{E\to+0} \frac{\log|\log N(E)|}{\log E} \leq -\frac{1}{2}.
  \end{equation}
\end{thm}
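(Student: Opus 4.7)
I would follow the standard Neumann-bracketing strategy for Lifshitz tails, combined with a reduction to a quasi-1D problem that exploits the $x_d$-symmetry from Assumption~A.

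\textbf{Step 1 (standard reductions).} By the usual Neumann-bracketing estimate for the IDS plus a Weyl bound on the number of eigenvalues below a fixed positive energy, $N(E) \leq C\,P(\lambda_1(H^N_{\omega,L}) \leq E)$ for any integer $L$ and $E$ in a bounded range. A further Neumann bracketing in the transverse directions $x_1,\ldots,x_{d-1}$ decomposes $C_L(0)$ into $L^{d-1}$ disjoint columns of the form $C_1(\gamma')\times[0,L]$ with $\gamma'\in\ze^{d-1}$; the associated column Hamiltonians $H^{\mathrm{col}}_\omega$ depend on the $L$ i.i.d.\ random variables $\omega_j := \omega((\gamma',j))$, $0\leq j<L$, and are mutually independent across columns. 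A union bound gives
\begin{equation*}
 P(\lambda_1(H^N_{\omega,L})\leq E) \leq L^{d-1}\,P(\lambda_1(H^{\mathrm{col}}_\omega)\leq E).
\end{equation*}
I will eventually set $L=\lfloor E^{-1/2}\rfloor$.

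\textbf{Step 2 (good-run estimate).} The key claim is that for $E$ small, $\lambda_1(H^{\mathrm{col}}_\omega)\leq E$ forces the sequence $\omega_0,\ldots,\omega_{L-1}$ to contain a \emph{good run}---a block of consecutive indices $j$ with $\omega_j\leq m$---of length at least $c_1 E^{-1/2}$. The contrapositive: if every sub-block of length $\ell := \lceil c_1 E^{-1/2}\rceil$ contains some $j$ with $\omega_j>m$, then Neumann bracketing of the column in $x_d$ into such sub-blocks combined with a block-level lower bound $\lambda_1(H^{N,\mathrm{block}}_\omega)\geq c_2/\ell^2$ valid on every block that contains a bad site forces $\lambda_1(H^{\mathrm{col}}_\omega)\geq c_2/\ell^2>E$ provided $c_1$ is large enough. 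Setting $p := P(\omega(\gamma)\leq m)\in(0,1)$ (where the strict inequality uses $m<M$), a union bound over starting positions gives $P(\text{column contains a good run of length}\geq\ell)\leq Lp^\ell$, hence
\begin{equation*}
 P(\lambda_1(H^{\mathrm{col}}_\omega)\leq E) \leq L\,p^{c_1 E^{-1/2}} \leq e^{-c_3 E^{-1/2}}.
\end{equation*}
Combining with Step~1 and taking $L=\lfloor E^{-1/2}\rfloor$ yields $N(E)\leq e^{-c_4 E^{-1/2}}$ for $E$ small, equivalent to the stated $\limsup$ bound.

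\textbf{Main obstacle.} The technical heart is the block-level lower bound $\lambda_1(H^{N,\mathrm{block}}_\omega)\geq c_2/\ell^2$. A naive Temple inequality with a constant trial function fails because the spectral gap $\lambda_2-\lambda_1$ of the block operator is only of order $1/\ell^2$. The right approach uses the ground-state representation with respect to the periodic ground state $\Phi_1$ of $H_1:=-\Delta+V_0+V_1$ (whose spectral infimum is $0$ by Assumption~A applied to $k=1$). A Perron--Frobenius argument comparing $\Phi_1|_{C_1(0)}$ with the Neumann ground state of $H_1^N$---both of which realize the zero infimum of the Neumann Rayleigh quotient on the unit cube, thanks to the cancellation of boundary contributions in the integration-by-parts formula (by Assumption~A's $x_d$-symmetry on the $x_d$-faces and by $\ze^d$-periodicity on the other faces)---shows that $\Phi_1$ actually satisfies Neumann boundary conditions on every unit cube. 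This gives the per-cube variational inequality
\begin{equation*}
 \int_{C_1(\gamma)}\Phi_1^2\bigl(|\nabla u|^2 + w_{\omega(\gamma)}u^2\bigr) \geq \lambda_1(H^N_{\omega(\gamma)})\int_{C_1(\gamma)}\Phi_1^2 u^2,
\end{equation*}
with $w_k := v_k - v_1$. Summing over cubes in the block and using $\lambda_1(H^N_{\omega(\gamma)})\geq\delta_0>0$ on bad cubes yields
\begin{equation*}
 \int_{\mathrm{block}}\Phi_1^2|\nabla u|^2 + \int_{\mathrm{block}}W_\omega\Phi_1^2 u^2 \geq \int_{\mathrm{good}}\Phi_1^2|\nabla u|^2 + \delta_0\int_{\mathrm{bad}}\Phi_1^2 u^2,
\end{equation*}
where $\mathrm{good}$ (resp.\ $\mathrm{bad}$) denotes the union of good (resp.\ bad) unit cubes in the block. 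The right-hand side is the quadratic form of an auxiliary problem in which the bad cubes act as ``soft Dirichlet'' regions of weight $\delta_0$; a Poincar\'e-type estimate, exploiting that the longest good subinterval of the block has length $\leq\ell-1$, then yields the $c_2/\ell^2$ lower bound.
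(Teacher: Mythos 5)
Your Steps 1 and 2 reproduce the paper's high-level strategy: Neumann-bracket $C_L(0)$ into one-dimensional columns (the paper's strips $\Sigma_p$), argue that the column ground-state energy can only drop below $E$ if the column has no ``bad'' sites over a long stretch, and bound the probability of that rare event by roughly $p^L$ with $p=\mathbb{P}(\omega(\gamma)\leq m)<1$ and $L\sim E^{-1/2}$. You also correctly identify the block-level lower bound $\lambda_1\gtrsim\ell^{-2}$ as the technical heart; it is the paper's Theorem~2.4.

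The gap is in your proof sketch of that block bound. You perform the ground-state transform with a \emph{single} reference function $\Phi_1$ (ground state for $v_1$) on every cube. After the transform, the quadratic form on a good cube carrying $v_k$, $k\leq m$, reads $\int\Phi_1^2|\nabla u|^2+\int(v_k-v_1)\Phi_1^2u^2$; your summing step silently discards the second term, which requires $\int(v_k-v_1)|\phi|^2\geq0$ --- false in general, since $v_k-v_1$ changes sign. The correct transform must use the \emph{local} Neumann ground state $\Psi_k$ of $H^N_k$ on each good cube, so that $q_k(\phi)=\int\Psi_k^2|\nabla(\phi/\Psi_k)|^2$ exactly; but then the weights $\Psi_{\omega(\gamma)}$ only patch across a cube interface when $v_k\underset{d}{\sim}v_\ell$ holds there (the paper's Lemma~2.4), which is not automatic under Assumption~A. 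That is why the paper first decomposes each (reflected) strip into segments $\Xi_j$, each containing exactly one ``obstacle'' (a bad cube or a $d$-unrelated adjacent pair) and otherwise mutually $d$-related good cubes, so that the glued weight $\Gamma$ exists on the good part; the lower bound on each $\Xi_j$ is then obtained from the boundary-term Poincar\'e inequality (Lemma~2.1) combined with the positivity of the Dirichlet-to-Neumann operator for the obstacle (Lemma~2.6). Your ``soft Dirichlet'' invocation is in the right spirit but rests on the false inequality above and does not explain how the bad-cube term $\delta_0\int_{\mathrm{bad}}|\phi|^2$ couples to the weighted gradient of $u=\phi/\Gamma$ on the good part. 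A secondary issue: under Assumption~A alone ($x_d$-symmetry only), the periodic ground state does \emph{not} satisfy Neumann conditions on the transverse faces of $C_1(0)$, so it is not the right reference object; the paper instead uses the Neumann ground state on $C_1(0)$ extended by reflection in $x_d$ only. Until the per-cube transform, the gluing lemma, and the $\Xi_j$ decomposition are in place, the crucial $\ell^{-2}$ lower bound is not established.
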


We expect \eqref{LT} holds with $-{d}/{2}$ in the right hand side,
which is known to be optimal (see e.g Theorem 0.2 and Section 2.2
in~\cite{KN1}).

If $m=M$, then we need further classification of the potential
functions.  We denote the standard basis of $\re^d$ by
\begin{equation*}
\mathbf{e}_j = (\d_{ji})_{i=1}^d \in\re^d, \quad j=1,\dots, d,
\end{equation*}
and we define an operator $H_{k\ell(j)}^N$ on $L^2(U_j)$ as 
\begin{equation}
  \label{eq:2}
  U_j =C_1(0)\cup C_1(\mathbf{e}_j), \quad j=1,\dots, d.  
\end{equation}
We set
\begin{equation}
  \label{eq:3}
  H_{k\ell(j)}^N =\begin{cases} -\triangle +V_0(x) +v_k(x) \quad
    &\text{on }C_1(0) \\
    -\triangle +V_0(x)+v_\ell(x-\mathbf{e}_j) \quad &\text{on }
    C_1(\mathbf{e}_j)
  \end{cases}
\end{equation}
with Neumann boundary conditions on $\pa U_j$, where
$k,\ell\in\{1,\dots,m\}$ and $j\in\{1,\dots,d\}$. We define
\begin{equation}
  \label{eq:14}
  v_j\underset{j}{\sim} v_\ell \quad \overset{def}{\Longleftrightarrow}
  \quad \inf\s(H^N_{k\ell(j)})=0.
\end{equation}
Namely, $v_k\underset{j}{\sim} v_\ell$ implies the coupling of two
local Hamiltonians $H_k^N$ and $H_\ell^N$ does not increase the ground
state energy. We note that $v_k\underset{j}{\not\sim} v_\ell$
generically for $k\neq \ell$.
\begin{thm}
  \label{th:2}
  Suppose Assumption~A with $m=M$. Suppose moreover that
  $v_k\underset{d}{\not\sim} v_\ell$ for some $k\neq \ell$.  Then
  \eqref{LT} holds, i.e., $H_\omega$ has Lifshitz singularities at the
  zero energy.
\end{thm}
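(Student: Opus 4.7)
My plan is to combine the classical Neumann-bracketing reduction of the integrated density of states to a finite cube with a further \emph{transverse} Neumann-bracketing that reduces the problem to an essentially one-dimensional spectral estimate on a column in the $d$-th direction, and then to exploit the bad-pair hypothesis to prove that estimate; the Lifshitz exponent $-1/2$ in \eqref{LT} is exactly what a genuinely one-dimensional Lifshitz phenomenon delivers. In Step~1, standard Neumann bracketing on the IDS together with the crude Weyl bound $N(H_{\omega,L}^N, E) \leq C_d L^d$ and the observation that $N(H_{\omega,L}^N, E) = 0$ when $\lambda_1(H_{\omega,L}^N) > E$ yield
\begin{equation*}
  N(E) \leq C\, \mathbb{P}\bigbrac{\lambda_1(H_{\omega,L}^N) \leq E}
\end{equation*}
for every $L \in \mathbb{N}$, with $C$ independent of $L$ and $E$.

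For Step~2, since $\supp v_k \subset C_1(0)$, the potential $V_0 + V_\omega$ restricted to any transverse column $S_{\c'} := (\c' + [0,1]^{d-1}) \times [0,L]$, indexed by $\c' \in \{0, \ldots, L-1\}^{d-1}$, depends only on the one-dimensional subconfiguration $(\omega(\c',j))_{j=0}^{L-1}$. Imposing Neumann boundary conditions on the walls separating the $L^{d-1}$ columns produces $\lambda_1(H_{\omega,L}^N) \geq \min_{\c'}\lambda_1(H_{\omega,S_{\c'}}^N)$, and since the column configurations are i.i.d., a union bound gives
\begin{equation*}
  \mathbb{P}\bigbrac{\lambda_1(H_{\omega,L}^N) \leq E} \leq L^{d-1}\,\mathbb{P}\bigbrac{\lambda_1(H_{\omega,S_0}^N) \leq E}.
\end{equation*}

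The heart of the argument is Step~3. Set $\e_0 := \min\bigset{\inf\s(H_{k\ell(d)}^N)}{v_k \underset{d}{\not\sim} v_\ell}$, which is positive by hypothesis and by finiteness of $\{1,\dots,M\}^2$. The key claim is that there exists $c_0 > 0$ such that whenever the longest constant run in the column configuration has length less than $\ell$, then $\lambda_1(H_{\omega,S_0}^N) \geq c_0/\ell^2$. The reverse inequality is easy: on a constant-$k$ run of length $\ell$, a smooth cut-off of the periodic extension of the ground state of $H_k^N$ (a well-defined bounded $H^1$ function, thanks to the reflection symmetries in Assumption~A(1) and~(3)) furnishes a trial function with Rayleigh quotient $O(1/\ell^2)$. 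The converse is the main technical estimate; heuristically, a near-zero-energy quasi-mode must be concentrated on a nearly-constant window of size at least $c/\sqrt{\lambda_1}$, and the absence of such a window forces a bad-pair crossing that costs the gap $\e_0$. In Step~4 one bounds the probability that an i.i.d.\ length-$L$ string in $\{1, \dots, M\}$ has a constant run of length $\geq \ell$ by $LM \cdot M^{-(\ell-1)}$, and choosing $L \asymp \ell \asymp E^{-1/2}$ in Steps~1--3 yields
\begin{equation*}
  N(E) \leq C\, E^{-d/2} \exp\bigpare{-c_1 E^{-1/2}},
\end{equation*}
which gives \eqref{LT} after taking $\log|\log(\cdot)|/\log(\cdot)$.

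The main obstacle is the spectral lower bound in Step~3. Ordinary Neumann bracketing of the column into sub-intervals yields only $\lambda_1 \geq 0$, because any sub-interval that contains a purely constant subconfiguration contributes a sub-operator with ground state zero, and the bracketing sees only the minimum. Extracting the bad-pair gap $\e_0$ therefore requires ruling out concentration of the true ground state on short constant sub-intervals; this should be achievable via an IMS partition of unity whose pieces straddle the bad pairs (balancing the kinetic cost $|\nabla\chi|^2 = O(1/\ell^2)$ against an $\e_0$ gain per bad crossing), or equivalently via an Agmon-type tunnelling estimate across bad pairs.
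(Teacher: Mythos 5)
Your Steps 1--2 and 4 are essentially the paper's opening and closing moves (Neumann bracketing to the cube, transverse bracketing into $d$-columns, a union bound over columns, and a large-deviation estimate for the column configuration), and they are fine. The problems are both in Step~3, which you correctly identify as the heart of the matter. First, the key lemma is misstated: ``constant run'' must be replaced by ``run within a single $\underset{d}{\sim}$-equivalence class''. Under the hypotheses of Theorem~\ref{th:2} it is perfectly possible to have $k\neq k'$ with $v_k\underset{d}{\sim}v_{k'}$; a column alternating $v_k,v_{k'},v_k,\dots$ has longest constant run~$1$, yet by Lemmas~2.4--2.5 of the paper the glued function $\Gamma$ is a genuine zero-energy Neumann eigenfunction, so the ground state energy of that column is exactly~$0$ rather than $\geq c_0$. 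The correct dichotomy is the paper's $(a)_p$ vs.\ $(b)_p$: the column either contains a bad pair $v_{\omega((p,\ell))}\underset{d}{\not\sim}v_{\omega((p,\ell'))}$, or all sites are pairwise $\underset{d}{\sim}$-equivalent, and it is the probability of the \emph{latter} event that is exponentially small in $L$.

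Second, and more seriously, even in the corrected form the central spectral lower bound is left unproved, and the IMS route you propose does not obviously close. The localization error $\bignorm{|\nabla\chi_j|^2}_\infty\sim 1/\ell^2$ is of the \emph{same order} as the best possible gain: the Neumann ground state energy of a length-$\ell$ window containing a bad pair is itself only $\sim 1/\ell^2$, because the eigenfunction spreads over the whole window and only samples the bad pair locally --- it is not uniformly bounded below by the two-cell gap $\e_0$. So IMS gives $\lambda_1\geq c/\ell^2 - C/\ell^2$ with no control over the sign of the difference. The paper avoids the partition of unity entirely: after reflecting the column about $\{x_d=0\}$ and decomposing it combinatorially into segments each beginning with a bad pair (or, in the setting of Theorem~\ref{th:1}, a bad single site), Theorem~\ref{th:4} bounds the quadratic form on each segment from below by combining (i) the weighted Poincar\'e inequality of Lemmas~2.1--2.5, whose error term is the trace $\norm{\C\f}_{L^2(S)}^2$, with (ii) the strict positivity of the Dirichlet-to-Neumann operator of the bad piece (Lemma~2.6). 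Via Green's formula the boundary term $\jap{T(\l)f,f}\geq\e\norm{f}^2$ then absorbs that trace term and leaves a clean $\geq 1/(CL^2)$ with no competing negative remainder. You would need an argument of that type; as written, the proposal contains no proof of the key estimate, and the heuristic it offers appears to stall exactly on the sign of the IMS remainder.
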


In order to obtain a more precise result on the existence and the
absence of Lifshitz singularities, we make a stronger symmetry
assumption on the potentials.
\begin{ass}
  In addition to satisfying Assumption~A, $V_0$ and $v_k$ are
  symmetric about $\bigset{x}{x_j=1/2}$ for all $j=1,\dots, d$, and
  $k=1,\dots, m=M$.
\end{ass}
\begin{thm}
  \label{th:3}
  Suppose Assumption~B. Then
  \begin{enumerate}
    \renewcommand{\theenumi}{\roman{enumi}}
    \renewcommand{\labelenumi}{\textbf{{\rm(\theenumi)} }}
  \item If $v_k\underset{j}{\not\sim} v_\ell$ for some $j$ and $k\neq
    \ell$, then \eqref{LT} holds.
  \item If $v_k\underset{j}{\sim} v_\ell$ for all $j$ and $k,\ell$,
    then the van Hove property holds, namely, there exists $C>0$ such that
    \begin{equation}
      \label{vH}
      \frac1C E^{d/2}\leq N(E)\leq C E^{d/2}.
    \end{equation}
  \end{enumerate}
\end{thm}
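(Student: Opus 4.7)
The equivalence relation $\underset{j}{\sim}$ is coordinate-dependent, while Assumption~B (and the equality $m=M$ it entails) is invariant under permutations of the axes. Relabelling the problematic coordinate $j$ as $d$ reduces the problem to the hypotheses of Theorem~\ref{th:2}, from which \eqref{LT} follows.

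\textbf{Part (ii).} The plan is to construct, under full compatibility, a deterministic family of normalized local ground states whose face traces on $\pa C_1(0)$ all coincide. Gluing them along a realization $\omega$ then produces a uniformly bounded and uniformly positive generalized ground state $\Psi^{(\omega)}$ of $H_\omega$ at energy~$0$. The ground-state (Doob) transform $f\mapsto f/\Psi^{(\omega)}$ then conjugates $H_\omega$ to a uniformly elliptic divergence-form operator, whose Neumann counting function on large cubes obeys Weyl's law and produces~\eqref{vH}.

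For $k=1,\dots,M$, let $\psi_k>0$ be the Neumann ground state of $H_k^N$. By Assumption~B, $\psi_k$ is symmetric about every $\{x_i=1/2\}$, so it takes the same value at the $2^d$ corners of $C_1(0)$ and has vanishing normal derivative on $\pa C_1(0)$. If $v_k\underset{j}{\sim}v_\ell$, the positive ground state $\phi$ of $H_{k\ell(j)}^N$ satisfies $\langle H^N_{k\ell(j)}\phi,\phi\rangle=0$, which splits as the sum of the two non-negative forms $\langle H_k^N\phi|_{C_1(0)},\phi|_{C_1(0)}\rangle$ and $\langle H_\ell^N\phi|_{C_1(\mathbf{e}_j)}(\cdot+\mathbf{e}_j),\phi|_{C_1(\mathbf{e}_j)}(\cdot+\mathbf{e}_j)\rangle$. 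Both must vanish, so $\phi|_{C_1(0)}=\alpha\psi_k$ and $\phi|_{C_1(\mathbf{e}_j)}=\beta\psi_\ell(\cdot-\mathbf{e}_j)$; continuity of $\phi$ on the interface $\{x_j=1\}$ forces $\psi_k(1,\cdot)\propto\psi_\ell(0,\cdot)$, and evaluating this proportion at any corner gives the value $\psi_k(0)/\psi_\ell(0)$, which does not depend on $j$. Under full compatibility, setting $\tilde\psi_k:=\psi_k/\psi_k(0)$ therefore makes the face traces of the $\tilde\psi_k$ coincide, so that
\begin{equation*}
\Psi^{(\omega)}(x):=\tilde\psi_{\omega(\gamma)}(x-\gamma),\qquad x\in C_1(\gamma),\ \gamma\in\ze^d,
\end{equation*}
is continuous on $\re^d$, has vanishing normal trace across every cell boundary, and is an $H^1_{\mathrm{loc}}$ weak solution of $H_\omega\Psi^{(\omega)}=0$. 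Compactness and positivity give $\omega$-independent constants $0<c_1\leq\Psi^{(\omega)}\leq c_2<\infty$.

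The unitary $f\mapsto f/\Psi^{(\omega)}$ from $L^2(\re^d,dx)$ onto $L^2(\re^d,(\Psi^{(\omega)})^2\,dx)$ sends the quadratic form of $H_\omega$ to $g\mapsto\int(\Psi^{(\omega)})^2|\nabla g|^2dx$, and preserves the Neumann form domain on every $C_L(0)$, $L\in\na$, because $\pa_n\Psi^{(\omega)}=0$ on $\pa C_L(0)$. Min-max combined with $c_1^2\leq(\Psi^{(\omega)})^2\leq c_2^2$ then yields
\begin{equation*}
\bigpare{c_1/c_2}^2\lambda_n\bigpare{-\Delta^N_{C_L(0)}}\leq\lambda_n\bigpare{H_{\omega,L}^N}\leq\bigpare{c_2/c_1}^2\lambda_n\bigpare{-\Delta^N_{C_L(0)}},
\end{equation*}
and Weyl's law for $-\Delta^N_{C_L(0)}$, passed to the limit $L\to\infty$, squeezes $N(E)$ between two multiples of $E^{d/2}$, establishing~\eqref{vH}. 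The principal obstacle is the corner-normalization step, which ensures that a \emph{single} rescaling $\alpha_k=1/\psi_k(0)$ achieves compatibility simultaneously in every direction; once this is in hand, the Doob transform and Weyl comparison are routine.
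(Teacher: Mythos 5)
Your proposal is correct and follows essentially the same route as the paper: part (i) by relabelling the coordinate and invoking Theorem~\ref{th:2}, and part (ii) by gluing the compatible local ground states into a global zero-energy solution, applying the ground state (Doob) transform, and comparing with the Neumann Laplacian via min-max and Weyl's law. The one point where you add genuine content is the corner-normalization argument proving that the proportionality constants in the interface relation $\psi_k|_{\{x_j=1\}}\propto\psi_\ell|_{\{x_j=0\}}$ are independent of $j$ under Assumption~B, so that a single rescaling $\tilde\psi_k=\psi_k/\psi_k(0)$ works for all directions; the paper simply asserts the existence of such $\mu_1,\dots,\mu_m$ with a back-reference to Lemma~2.4, which only treats one direction. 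You are also more careful than the paper's display in converting the Rayleigh-quotient comparison $K^{-2}\lambda_n(-\Delta^N)\leq\lambda_n(H^N_{\omega,L})\leq K^2\lambda_n(-\Delta^N)$ into a bound on counting functions (the energy, not the count, should be rescaled), though both versions yield the van Hove behaviour~\eqref{vH} after the $L\to\infty$ limit.
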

\noindent In~\eqref{vH}, the asymptotic is new only for $E$ small; for
$E$ large, it is a consequence of Weyl's law. The example in
Section~3 of \cite{KN1} is a special case of (ii) of
Theorem~\ref{th:3}.

In a previous paper \cite{KN1}, we used the concavity of the ground
state energy with respect to the random parameters, and also used an
operator theoretical trick to reduce the problem to monotonous
perturbation case. These methods are not available under the
assumptions of the present paper. Instead, we employ a quadratic
inequality similar to the Poincar{\'e} inequality, and take advantage of
the positivity of certain Dirichlet-to-Neumann operators to obtain a
lower bound of the ground state energy for Schr{\"o}dinger operators on a
strip. This estimate is quasi one dimensional, and this is why we
obtain Lifshitz tail estimate with the exponent corresponding to one
dimensional case. We do believe that this method can be refined to
obtain the optimal exponent, though we have not been successful so
far.

This paper is organized as follows. We discuss the eigenvalue estimate
on a strip in Section~\ref{sec:lower-bounds-ground} and prove our main
theorems in Section~\ref{sec:proof-main-theorems}. We discuss an
application to random displacement models in
Section~\ref{sec:appl-rand-displ}, and an application to the model
studied in~\cite{KN1} in Section~\ref{sec:alloy-type-model}.

Throughout this paper, we use the following notations:
$\mathbb{P}(\cdot)$ denotes the probability measure for the random
potential, and $\mathbb{E}(\cdot)$ denotes the expectation;
$\mathcal{D}(A)$ denotes the definition domain of an operator $A$;
$\jap{\cdot,\cdot}$ denotes the inner product of $L^2$-spaces; $\pa
\O$ denotes the boundary of a domain $\O$; and $\#\Lambda$ denotes the
cardinality of a set $\Lambda$.
\vskip.1cm{\bf Acknowledgment:} It is a pleasure to thank Michael Loss and
Gunter Stolz for valuable discussions. We also thank the organizers of
the workshop ``Disordered Systems: Random Schr{\"o}dinger Operators and
Random Matrices'' at the MF Oberwolfach where part of this work was
done.

\section{Lower bounds on the ground state energy}
\label{sec:lower-bounds-ground}
Throughout this section, we suppose $v_1,\dots v_m$ satisfy
Assumption~A.  Let $a>0$,
\begin{equation*}
\O_0 =[0,1]^{d-1}\times [-a,0]\ \subset \re^d,
\end{equation*}
and let $W_0\in C^0(\O_0)$ be a real-valued function on $\O_0$.  We
set
\begin{equation*}
P_0^N =-\triangle +W_0 \quad\text{on }L^2(\O_0)
\end{equation*}
with Neumann boundary conditions. Let $L\in\mathbb{N}$,
\begin{equation*}
\O_1=[0,1]^{d-1}\times [0,L]
\end{equation*}
and let $W_1\in C^0(\O_1)$ such that
\begin{equation*}
W_1= V_0+ v_{k(\ell)}(x-\ell \mathbf{e}_d) \quad \text{if } x\in
C_1(\ell \mathbf{e}_d), \ \ell=0,\dots, L-1,
\end{equation*}
where $\{k(\ell)\}_{\ell=0}^{L-1}$ is a sequence with values in
$\{1,\dots, m\}$.  We then set
\begin{equation*}
\O=\O_0\cup\O_1, \quad W(x)=\begin{cases} W_0(x) \quad &\text{if } x\in \O_0 \\
  W_1(x) \quad &\text{if }x\in \O_1 \end{cases}
\end{equation*}
and set
\begin{equation*}
P^N =-\triangle +W \quad \text{on } L^2(\O)
\end{equation*}
with Neumann boundary conditions. Then, the main result of this
section is as follows.
\begin{thm}
  \label{th:4}
  Suppose $\inf \s(P^N_0)>0$, and suppose
  $v_{k(\ell)}\underset{d}{\sim} v_{k(\ell')}$ for $\ell,\ell'\in
  \{0,\dots,L-1\}$.  Then, there exists $C>0$ such that $C$ is
  independent of $L$ and of the sequence $\{k(\ell)\}$, and such that
  \begin{equation*}
  \inf \s(P^N) \geq \frac{1}{CL^2}.
  \end{equation*}
\end{thm}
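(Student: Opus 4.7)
The plan is to combine an Agmon--style ground state substitution on $\O_1$---yielding a $1/L^2$ lower bound from a one-dimensional Poincar\'e inequality in the long direction---with Dirichlet-to-Neumann positivity on $\O_0$ coming from $\inf\s(P_0^N)>0$ to absorb the unavoidable boundary term. The whole argument is quasi-one-dimensional, matching the exponent $-1/2$ of the main theorems. The first step is to construct a positive $\Phi\in H^2(\O_1)$ satisfying $(-\triangle+W_1)\Phi=0$ with $\pa_\nu\Phi=0$ on $\pa\O_1$, together with uniform bounds $0<c^{-1}\leq\Phi\leq c$ that are independent of $L$ and of the sequence $\{k(\ell)\}$. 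On each cube $C_1(\ell\mathbf{e}_d)$ I take the positive Neumann ground state $\phi_{k(\ell)}$ of the translate of $H^N_{k(\ell)}$, which has eigenvalue~$0$ by Assumption~A(2). Assumption~A(3) makes each $\phi_{k(\ell)}$ symmetric about $\{x_d=\ell+\tfrac12\}$, so $\pa_{x_d}\phi_{k(\ell)}$ vanishes at both faces of $C_1(\ell\mathbf{e}_d)$ and any cube-wise rescaling $c_\ell\phi_{k(\ell)}$ has automatic $C^1$-matching across interior faces. Unwrapping the definition of $H^N_{k\ell(d)}$, the relation $v_{k(\ell)}\underset{d}{\sim}v_{k(\ell+1)}$ is equivalent to proportionality of the traces $\phi_{k(\ell)}(\cdot,0)$ and $\phi_{k(\ell+1)}(\cdot,0)$, so the $c_\ell$'s can be chosen to match values as well. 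Because $\{v_k\}_{k\leq m}$ is finite, the ratios $c_{\ell+1}/c_\ell$ take values in a finite positive set, giving the asserted uniform two-sided control of $\Phi$.

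Next, for a trial $u$ in the form domain of $P^N$, I write $u|_{\O_1}=\Phi v$. Integration by parts using $(-\triangle+W_1)\Phi=0$ and $\pa_\nu\Phi=0$ yields $\int_{\O_1}(|\nabla u|^2+W_1u^2)\,dx=\int_{\O_1}\Phi^2|\nabla v|^2\,dx$. Dropping the transverse gradient and applying
\[
\int_0^L v(x',x_d)^2\,dx_d\leq 2L\,v(x',0)^2+2L^2\int_0^L|\pa_{x_d}v(x',x_d)|^2\,dx_d,
\]
then integrating in $x'\in[0,1]^{d-1}$ and using the two-sided bound on $\Phi$, produces the strip estimate
\[
\int_{\O_1}(|\nabla u|^2+W_1u^2)\,dx\geq \frac{C_1}{L^2}\|u\|_{L^2(\O_1)}^2-C_2\|f\|_{L^2([0,1]^{d-1})}^2,
\]
with $f=u(\cdot,0)$ and $C_1,C_2>0$ depending only on the family. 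Separately, the Dirichlet-to-Neumann operator $\Lambda_0$ at $\{x_d=0\}$ associated with $P_0^N$, defined by $\jap{f,\Lambda_0 f}=\inf\bigl\{\int_{\O_0}(|\nabla u_0|^2+W_0u_0^2):u_0|_{x_d=0}=f\bigr\}$, satisfies $\Lambda_0\geq c_bI$ for some $c_b>0$ depending only on $W_0$ and $a$: compactness of $\O_0$ combined with $\lambda_0:=\inf\s(P_0^N)>0$ upgrades strict positivity to a uniform lower bound. Combining the two bounds $\int_{\O_0}(\cdots)\geq\lambda_0\|u\|^2_{L^2(\O_0)}$ and $\int_{\O_0}(\cdots)\geq c_b\|f\|^2$ by a convex combination, then adding to the strip estimate (with a scaling trade-off to cover the case $C_2>c_b$), yields $\jap{u,P^Nu}\geq (C_3/L^2)\|u\|_{L^2(\O)}^2$, which is the claim.

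The delicate part is Step~1: the relation $v_k\underset{d}{\sim}v_\ell$ merely asserts the existence of a zero-eigenvalue joint ground state on $U_d$, and I need to extract from it uniform control of the compounded rescaling constants $c_\ell$ as the length $L$ of the chain grows. The crucial fact---which follows from uniqueness of the positive Neumann ground states together with Assumption~A(3)---is that each pairwise trace ratio $\phi_k(\cdot,0)/\phi_\ell(\cdot,0)$ is a positive \emph{constant}, so that $\underset{d}{\sim}$ is a genuine equivalence relation and the finiteness of $\{v_k\}_{1\leq k\leq m}$ is sufficient to keep the $c_\ell$'s in a bounded positive range, uniformly in $L$ and the sequence.
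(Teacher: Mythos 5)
Your proposal is correct and follows essentially the same route as the paper: patch per-cube Neumann ground states into a global positive $\G$ on $\O_1$ (justified by the $\underset{d}{\sim}$ relation together with reflection symmetry, i.e.\ Lemmas~2.3--2.5), apply a ground-state substitution plus a one-dimensional Poincar\'e inequality (Lemma~2.2) to get a $1/L^2$ bound on the strip with a residual trace term, and absorb that trace term using coercivity of a Dirichlet-to-Neumann operator on $\O_0$, which follows from $\inf\s(P_0^N)>0$ (Lemma~2.6). The only differences are cosmetic: you define the DtN form at energy zero and run a direct variational estimate on all trial functions rather than applying the paper's energy-dependent $T(\l)$ to the ground-state eigenfunction, and where the paper retains a $1/L$ factor in front of the boundary term (Lemma~2.5) so that absorption is automatic for large $L$, your version drops it and instead uses the $\delta$-trick (take only a small positive multiple of $Q_1$, valid since $Q_1\geq 0$), which closes the argument just as well.
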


In the following, we suppose $v_k\underset{d}{\sim} v_\ell$ for all
$k,\ell$ for simplicity (and without loss of generality).  We prove
Theorem~\ref{th:4} by a series of lemmas. First, we show a variant of
the classical Poincar{\'e} inequality.  Let $\C$ be the trace operator
from $H^1(\O_1)$ to $L^2(S)$ with $S=[0,1]^{d-1}\times\{0\}$, i.e.,
\begin{equation*}
\C\f(x') =\f(x',0) \quad \text{for }x'\in [0,1]^{d-1}, \ \f\in
C^0(\O_1),
\end{equation*}
and $\C$ extends to a bounded operator from $H^1(\O_1)$ to $L^2(S)$.
\begin{lem}
  Let $\f\in H^1(\O_1)$. Then
  \begin{equation*}
  \frac{2}{L} \norm{\C \f}^2_{L^2(S)} +\norm{\nabla\f}_{L^2(\O_1)}^2
  \geq \frac{1}{L^2} \norm{\f}_{L^2(\O_1)}^2.
  \end{equation*}
\end{lem}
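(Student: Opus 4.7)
The plan is to prove this Poincaré-type inequality by a direct one-dimensional argument applied slice-by-slice in the long direction $x_d \in [0,L]$, without needing any spectral decomposition. The idea is that the trace term on $S = [0,1]^{d-1}\times\{0\}$ is exactly what is needed to replace the usual ``mean-zero'' hypothesis in Poincaré's inequality on an interval.

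The key step, carried out first, is to work with $\varphi \in C^1(\overline{\Omega_1})$ (dense in $H^1(\Omega_1)$) and use the fundamental theorem of calculus in the $x_d$ variable:
\begin{equation*}
\varphi(x',x_d) = \varphi(x',0) + \int_0^{x_d} \pa_d \varphi(x',s)\,ds, \qquad x'\in[0,1]^{d-1},\ x_d\in[0,L].
\end{equation*}
Applying the elementary inequality $(a+b)^2 \le 2a^2 + 2b^2$ together with the Cauchy-Schwarz inequality on the integral, one obtains the pointwise bound
\begin{equation*}
|\varphi(x',x_d)|^2 \;\le\; 2|\varphi(x',0)|^2 + 2x_d\int_0^L |\pa_d\varphi(x',s)|^2\,ds.
\end{equation*}

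The second step is to integrate this pointwise bound over $\Omega_1 = [0,1]^{d-1}\times[0,L]$. Integrating first in $x_d$ over $[0,L]$ produces the factors $2L$ and $L^2$ (the latter from $\int_0^L x_d\,dx_d = L^2/2$, times the $2$ out front), and then integrating in $x'$ over $[0,1]^{d-1}$ recognizes $\int_{[0,1]^{d-1}}|\varphi(x',0)|^2\,dx' = \|\Gamma\varphi\|_{L^2(S)}^2$ and $\int_{[0,1]^{d-1}}\int_0^L |\pa_d\varphi|^2\,ds\,dx' \le \|\nabla\varphi\|_{L^2(\Omega_1)}^2$. This yields
\begin{equation*}
\|\varphi\|_{L^2(\Omega_1)}^2 \;\le\; 2L\,\|\Gamma\varphi\|_{L^2(S)}^2 + L^2\,\|\nabla\varphi\|_{L^2(\Omega_1)}^2,
\end{equation*}
which is equivalent to the stated inequality after division by $L^2$.

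The final step is to extend from $C^1(\overline{\Omega_1})$ to all of $H^1(\Omega_1)$ by density, using continuity of each term: the gradient and $L^2$ norms are manifestly continuous in $H^1$, and the trace operator $\Gamma:H^1(\Omega_1)\to L^2(S)$ is bounded. There is really no serious obstacle here; the only point that might require a sentence of care is ensuring that the constants one extracts from the Cauchy-Schwarz step are tight enough (the factor of $2x_d$ inside and then $\int_0^L x_d\,dx_d = L^2/2$ are what produce exactly $\frac{1}{L^2}$ on the right, with the coefficient $\frac{2}{L}$ on the trace term). The mild subtlety to watch for is simply keeping track of these constants so that the final inequality comes out with the sharp form stated.
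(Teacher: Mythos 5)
Your proof is correct and follows essentially the same route as the paper: a slice-wise fundamental theorem of calculus in the $x_d$ variable, Cauchy–Schwarz on the integral, the elementary inequality $(a+b)^2\le 2a^2+2b^2$, integration over $\O_1$ producing $\int_0^L t\,dt=L^2/2$, and a density/continuity argument to pass from $C^1$ to $H^1$. The only cosmetic difference is that you square before applying Cauchy–Schwarz while the paper bounds $|\f(x',t)|$ pointwise first and squares afterwards; the computation is identical.
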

\begin{proof}
  It suffices to show the estimate for $\f\in C^1(\O_1)$. Since
  \begin{equation*}
  \f(x',t)=\f(x',0)+\int_0^t \pa_{x_d}\f(x',s)ds, \quad x'\in
  [0,1]^{d-1}, t\in [0,L],
  \end{equation*}
  we have
  \begin{align*}
    |\f(x',t)| &\leq |\f(x',0)|+\int_0^t |\pa_{x_d}\f(x',s)|ds \\
    &\leq |\f(x',0)| +\sqrt{t} \biggpare{\int_0^L
      |\nabla\f(x',s)|^2ds}^{1/2}
  \end{align*}
  by the Cauchy-Schwarz inequality. This implies
  \begin{align*}
    \norm{\f}_{L^2(\O_1)}^2 &\leq \int \int_0^L \biggbra{|\f(x',0)| \
      +\sqrt{t} \biggpare{\int_0^L |\nabla\f(x',s)|^2ds}^{1/2}}^2 dt dx' \\
    &\leq 2 \int\int_0^L |\f(x',0)|^2 dsdx'
    +2 \int_0^L tdt \times \norm{\nabla\f}^2_{L^2(\O_1)}\\
    &= 2L\norm{\C\f}_{L^2(S)}^2 +L^2 \norm{\nabla\f}_{L^2(\O_1)}^2
  \end{align*}
  and the claim follows.
\end{proof}

For $k\in\{1,\dots, M\}$, we set
\begin{equation*}
q_k(\f,\g) =\int_{C_1(0)} \bigpare{\nabla\f\cdot\nabla\overline{\g} +
  v_k \f\overline{\g}} dx, \quad \f,\g\in H^1(C_1(0)),
\end{equation*}
which is the quadratic form corresponding to $H_k^N$.  Let $\G_k$ be
the positive ground state for $H_k^N$, which is unique up to a
constant. Since $\inf \s(H_k^N)=0$, we expect $\f/\G_k$ is close to a
constant if $q_k(\f,\f)$ is close to 0, and this observation is
justified by the following lemma.
\begin{lem}
  There exists $c_1>0$ such that
  \begin{equation*}
  \norm{\nabla(\f/\G_k)}_{L^2(C_1(0))}^2 \leq c_1 q_k(\f,\f), \quad
  \f\in H^1(C_1(0)), k=1,\dots, m.
  \end{equation*}
\end{lem}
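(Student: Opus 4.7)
The plan is to apply the classical ground state representation (Agmon's identity), which is precisely the right tool whenever the infimum of a nonnegative quadratic form is attained as a ground state eigenvalue. The identity I aim to establish is
\begin{equation*}
q_k(\f,\f) = \int_{C_1(0)} \G_k^{\,2}\,\bigabs{\nabla(\f/\G_k)}^2\, dx,
\end{equation*}
from which the lemma follows immediately once one knows $\G_k\geq c_k>0$ on the closed cube.

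First, I would verify the uniform positivity of $\G_k$. Since $V_0+v_k$ is continuous on the bounded Lipschitz domain $\overline{C_1(0)}$ and $0$ is a simple eigenvalue of $H_k^N$ (standard for the Neumann Laplacian with a continuous potential on a bounded domain), the ground state $\G_k$ may be taken positive, lies in $H^2(C_1(0))\cap C(\overline{C_1(0)})$ by elliptic regularity, and satisfies $\G_k>0$ in the interior by the strong maximum principle. Strict positivity up to the Neumann boundary follows either from the Hopf lemma for Neumann problems, or, equivalently, by extending $V_0$, $v_k$, and $\G_k$ by even reflection across each face so as to apply Harnack's inequality on a slightly larger cube. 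By continuity and compactness this gives $\G_k\geq c_k>0$ on $\overline{C_1(0)}$.

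For the algebraic identity, set $u=\f/\G_k\in H^1(C_1(0))$; this is legitimate because $\G_k$ is bounded above and below away from $0$ and sufficiently regular for the chain rule. Expanding
\begin{equation*}
\abs{\nabla\f}^2 = \G_k^{\,2}\abs{\nabla u}^2 + 2u\G_k\,\nabla u\cdot\nabla\G_k + u^2\abs{\nabla\G_k}^2,
\end{equation*}
and using $H_k^N\G_k=0$ to rewrite the potential contribution as $(V_0+v_k)\f^2 = u^2\G_k\,\Delta\G_k$, I integrate by parts on $C_1(0)$. The boundary term vanishes because $\pa_n\G_k=0$ on $\pa C_1(0)$, and the resulting bulk term $-\int\pare{2u\G_k\nabla u\cdot\nabla\G_k + u^2\abs{\nabla\G_k}^2}dx$ cancels the cross term and the $\abs{\nabla\G_k}^2$ term in the expansion above. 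What remains is $q_k(\f,\f)=\int_{C_1(0)}\G_k^{\,2}\abs{\nabla u}^2\,dx$, as desired. The lemma then follows with $c_1=\max_{1\leq k\leq m}c_k^{-2}$, after approximating a general $\f\in H^1(C_1(0))$ by smooth test functions to pass from formal integration by parts to the $H^1$ framework.

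The only step demanding any real care is the boundary positivity of $\G_k$, since the Hopf lemma is classically stated for Dirichlet problems; however, the even-reflection argument noted above makes this transparent under the mere continuity assumption on $V_0$ and $v_k$. Everything else is bookkeeping.
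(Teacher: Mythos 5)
Your argument is correct and rests on the same ground state transform (Agmon identity) as the paper, arriving at $q_k(\f,\f)=\int_{C_1(0)}\G_k^2\,|\nabla(\f/\G_k)|^2\,dx$ and then using a uniform lower bound on $\G_k$. The only real difference is in how the cross terms are killed: you integrate by parts, which requires $H^2$ regularity of $\G_k$ and the Neumann condition $\pa_n\G_k=0$ on $\pa C_1(0)$, whereas the paper regroups the leftover terms as $q_k(|\f/\G_k|^2\G_k,\G_k)=\jap{(H_k^N)^{1/2}|\f/\G_k|^2\G_k,(H_k^N)^{1/2}\G_k}=0$, a purely form-theoretic cancellation that only needs $\G_k\in H^1$; your additional paragraph on strict positivity of $\G_k$ up to the boundary (via reflection and Harnack) makes explicit a point the paper leaves unstated.
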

\begin{proof}
  This is a consequence of the so-called {\em ground state transform}.
  It suffices to show the inequality when $\f\in C^1(C_1(0))$. We set
  $f=\f/\G_k$.  Then we have
  \begin{align*}
    q_k(\f,\f) &= \jap{\nabla(f\G_k),\nabla(f\G_k)} + \jap{v_k f\G_k,f\G_k} \\
    &= \norm{\G_k(\nabla f)}^2 + \jap{\G_k\nabla f, f \nabla \G_k}
    +\jap{f\nabla \G_k, \G_k \nabla f} \\
    &\hspace{3cm}  +\jap{f\nabla\G_k,f \nabla \G_k} + \jap{v_k f\G_k,f\G_k} \\
    &= \norm{\G_k(\nabla f)}^2 +\jap{\nabla(|f|^2\G_k),\nabla \G_k}
    +\jap{v_k |f|^2\G_k,\G_k} \\
    &= \norm{\G_k(\nabla f)}^2 + q_k(|f|^2\G_k,\G_k).
  \end{align*}
  Since $q_k(|f|^2\G_k,\G_k) =
  \jap{(H_k^N)^{1/2}|f|^2\G_k,(H_k^N)^{1/2}\G_k}=0$, we have
  \begin{equation*}
  q_k(\f,\f) = \norm{\G_k(\nabla f)}^2 \geq (\inf |\G_k|)^2
  \norm{\nabla f}^2,
  \end{equation*}
  and we may choose $c_1= (\min_k\inf|\G_k|)^{-2}$.
\end{proof}
\begin{lem}
  Suppose $v_k\underset{d}{\sim} v_\ell$. Then, there exists
  $\m_1,\m_2>0$ such that
  \begin{equation*}
  \m_1 \G_k(x',0)=\m_2 \G_\ell(x',0), \qquad \text{for }x'\in
  [0,1]^{d-1}.
  \end{equation*}
\end{lem}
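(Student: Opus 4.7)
The plan is to exploit the hypothesis $v_k\underset{d}{\sim}v_\ell$, which by definition~\eqref{eq:14} means the glued Neumann Hamiltonian $H^N_{k\ell(d)}$ on $U_d=C_1(0)\cup C_1(\mathbf{e}_d)$ has ground state energy zero. I would begin by fixing a normalized, strictly positive ground state $\F$ of $H^N_{k\ell(d)}$; its existence, uniqueness up to a scalar, and strict positivity are standard Perron--Frobenius type facts for Neumann Schr{\"o}dinger operators on a bounded connected domain with continuous potential.

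The central observation is that $\F$ splits across the interface into two pieces which are each individually ground states of the one-cube operators. Set $\f_k:=\F|_{C_1(0)}$ and $\tilde\f_\ell(x):=\F(x+\mathbf{e}_d)$, both viewed as elements of $H^1(C_1(0))$; no boundary condition is imposed on their traces at $\{x_d=1\}$, so they sit freely inside the form domains of $H_k^N$ and $H_\ell^N$. Splitting the integral defining the quadratic form of $H^N_{k\ell(d)}$ yields
\begin{equation*}
  q_k(\f_k,\f_k)+q_\ell(\tilde\f_\ell,\tilde\f_\ell)=\jap{H^N_{k\ell(d)}\F,\F}_{L^2(U_d)}=0.
\end{equation*}
Since Assumption~A forces $\inf\s(H_k^N)=\inf\s(H_\ell^N)=0$, both summands above are nonnegative, so each must vanish separately. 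By uniqueness of the ground state one then gets $\f_k=\m_1\G_k$ and $\tilde\f_\ell=\m_2\G_\ell$ for some constants $\m_1,\m_2$, which are strictly positive because $\F>0$ on $U_d$ and $\G_k,\G_\ell>0$ on $C_1(0)$.

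To finish, I would match the two pieces on the common interface. Agreement of the traces of $\F\in H^1(U_d)$ from each side gives
\begin{equation*}
  \m_1\G_k(x',1)=\m_2\G_\ell(x',0),\qquad x'\in[0,1]^{d-1}.
\end{equation*}
The symmetry of $V_0$ and $v_k$ about $\{x_d=1/2\}$ (parts (1) and (3) of Assumption~A), together with uniqueness of the positive ground state of $H_k^N$, implies $\G_k(x',x_d)=\G_k(x',1-x_d)$, hence $\G_k(x',1)=\G_k(x',0)$, and the claimed identity follows.

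The only step requiring care is the energy decomposition: one needs the restrictions of an $H^1(U_d)$ function to the two sub-cubes to land in the form domains of $H_k^N$ and $H_\ell^N$ with no residual interface condition, so that the piecewise lower bound $q_j\geq 0$ on $H^1(C_1(0))$ can be applied. This is a standard $H^1$-restriction fact, but it is the mechanism that actually extracts information from the hypothesis $v_k\underset{d}{\sim}v_\ell$.
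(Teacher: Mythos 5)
Your proposal is correct and follows the paper's proof essentially verbatim: restrict the ground state $\F$ of $H^N_{k\ell(d)}$ to each sub-cube, observe that the two pieces each minimize $q_k$ and $q_\ell$ (you spell out the nonnegativity-plus-vanishing-sum argument that the paper leaves implicit), invoke uniqueness of the positive ground states, match traces at the interface, and use the reflection symmetry of $\G_k$. The only difference is a welcome bit of extra detail on the energy decomposition step.
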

\begin{proof}
  Consider $H_{k\ell(d)}^N$ in $U_d$ (see~(\ref{eq:2})
  and~(\ref{eq:3}) in Section~\ref{sec:introduction}), and let $\F\in
  L^2(U_d)$ be the positive ground state of $H_{k\ell(j)}^N$. We set
  \begin{equation*}
  \f_1 =\F\lceil_{C_1(0)}, \quad \f_2(\cdot) =
  \F(\cdot+\mathbf{e}_d)\lceil_{C_1(0)}.
  \end{equation*}
  Then $\f_1,\f_2$ are positive and $q_k(\f_1,\f_1)
  =q_\ell(\f_2,\f_2)=0$.  By the variational principle and the
  uniqueness of the ground states, we learn
  \begin{equation*}
  \f_1=\m_1\G_k, \quad \f_2=\m_2\G_\ell
  \end{equation*}
  with some $\m_1,\m_2>0$. By Assumption~A, $\G_k$ and $\G_\ell$ are
  symmetric about $\{x_d=1/2\}$, and hence
  \begin{equation*}
  \m_1\G_k(x',0) =\m_1\G_k(x',1)=\f_1(x',1)=\f_2(x',0)
  =\m_2\G_\ell(x',0)
  \end{equation*}
  for $x'\in [0,1]^{d-1}$, where we have used the continuity of $\F$
  on $\{x_d=1\}$.
\end{proof}
Now, let $\O_1$ and $W_1$ be as in the beginning of
Section~\ref{sec:lower-bounds-ground}, and define
\begin{equation*}
P_1^N=-\triangle +W_1 \quad \text{on } L^2(\O_1)
\end{equation*}
with Neumann boundary conditions. We set
\begin{equation*}
Q_1(\f,\g) =\int_{\O_1} \bigpare{\nabla\f\cdot\nabla\overline{\g}
  +W_1\f\overline\g} dx = \jap{(P_1^N)^{1/2}\f,(P_1^N)^{1/2}\g}
\end{equation*}
for $\f,\g\in H^1(\O_1)=\mathcal{D}((P_1^N)^{1/2})$. Then, we have
\begin{lem}
  There exists $c_2>0$ such that $c_2$ is independent of $L$ and of
  the sequence $\{k(\ell)\}$, and
  \begin{equation*}
  \frac{1}{L} \norm{\C\f}_{L^2(S)}^2 +Q_1(\f,\f) \geq \frac{1}{c_2
    L^2} \norm{\f}_{L^2(\O_1)}^2
  \end{equation*}
  for $\f\in H^1(\O_1)$.
\end{lem}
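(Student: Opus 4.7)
The plan is to reduce the desired weighted inequality to the Poincar{\'e}-type inequality of the first lemma of this section by globalizing the ground state transform across the cubes that tile $\O_1$. The compatibility hypothesis $v_{k(\ell)}\underset{d}{\sim}v_{k(\ell')}$ combined with the boundary-trace lemma will let me glue the cubewise ground states $\G_{k(\ell)}(\cdot-\ell\mathbf{e}_d)$ into a single continuous positive weight $\G$ on $\O_1$. The reflection symmetry of Assumption~A~(3) will make this gluing work with only a fixed rescaling per index, so that $\G$ is bounded above and below by positive constants uniform in $L$ and in $\{k(\ell)\}$.

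First I would normalize the ground states. By the third lemma, for each pair $k,\ell\in\{1,\dots,m\}$ the traces $\G_k(\cdot,0)$ and $\G_\ell(\cdot,0)$ are proportional, so one can fix $\l_k>0$, $k=1,\dots,m$, such that $\tilde\G_k:=\l_k\G_k$ all share the same boundary trace $\G_\ast(x')$ on $\{x_d=0\}$. Since $k$ ranges over a fixed finite set, the $\l_k$ form a fixed finite collection of positive numbers, and so $0<c\le\tilde\G_k(x)\le C$ on $C_1(0)$ for every $k$, with $c,C$ depending only on $v_1,\dots,v_m$. I then set $\G(x):=\tilde\G_{k(\ell)}(x-\ell\mathbf{e}_d)$ on $C_1(\ell\mathbf{e}_d)$, $\ell=0,\dots,L-1$. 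Using the symmetry $\tilde\G_k(x',0)=\tilde\G_k(x',1)$ (Assumption~A~(3)) together with the common-trace normalization, the two one-sided traces of $\G$ at each interior interface $\{x_d=\ell\}$ both equal $\G_\ast(x')$; hence $\G\in H^1(\O_1)$ and $c\le\G(x)\le C$ on all of $\O_1$.

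Now set $f:=\f/\G\in H^1(\O_1)$ and apply the first lemma to $f$: $\tfrac{2}{L}\|\C f\|_{L^2(S)}^2+\|\nabla f\|_{L^2(\O_1)}^2\ge L^{-2}\|f\|_{L^2(\O_1)}^2$. The uniform bounds on $\G$ give $\|f\|_{L^2(\O_1)}^2\ge C^{-2}\|\f\|_{L^2(\O_1)}^2$ and $\|\C f\|_{L^2(S)}^2\le c^{-2}\|\C\f\|_{L^2(S)}^2$. For the gradient term I would use that the ground state transform identity of the second lemma is scale-invariant and so holds with $\tilde\G_k$ in place of $\G_k$: on each cube $C_1(\ell\mathbf{e}_d)$ (after translation) this identity reads $\int_{C_1(\ell\mathbf{e}_d)}|\nabla f|^2\G^2\,dx\ge c^2\|\nabla f\|^2_{L^2(C_1(\ell\mathbf{e}_d))}$, and its left-hand side is exactly the contribution of the $\ell$-th cube to $Q_1(\f,\f)$. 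Summing in $\ell$ gives $\|\nabla f\|^2_{L^2(\O_1)}\le c^{-2}Q_1(\f,\f)$. Substituting the three estimates into the Poincar{\'e} inequality and absorbing numerical constants yields the claim with $c_2$ of order $C^2/c^2$.

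The delicate point, and the reason for both Assumption~A~(3) and the third lemma, is the uniformity of the global weight $\G$. A naive iterative matching along the strip would introduce cube-dependent rescaling factors accumulating multiplicatively with $\ell$; such factors can grow geometrically in $L$ and destroy the $L$-independence of $c_2$. Two features prevent this: (i) the third lemma gives a \emph{single} scalar ratio per pair $(k,\ell)$ rather than an $x'$-dependent one, so matching at an interface is a single-constant condition, and (ii) the symmetry $\G_k(x',0)=\G_k(x',1)$ makes the top and bottom traces of each $\tilde\G_k$ identical. Together they collapse the matching condition at every interface to the \emph{same} normalization, so a fixed rescaling $\l_k$ per index—independent of the position $\ell$ in the strip—suffices.
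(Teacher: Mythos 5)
Your proposal is correct and follows essentially the same route as the paper's proof: normalize the ground states via the preceding lemma so they share a common boundary trace, glue them into a single global weight $\G$ on $\O_1$ using the reflection symmetry, apply the Poincar{\'e}-type lemma to $\f/\G$, and control the resulting three terms with the uniform two-sided bounds on $\G$ and the ground state transform lemma. Your closing paragraph on why the matching constants do not accumulate along the strip is a useful explication of a point the paper leaves implicit, but the underlying argument is the same.
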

\begin{proof}
  By Lemma~2.4, there exist $\m_1,\dots, \m_m>0$ such that
  \begin{equation*}
  \m_1\G_1(x',0)=\m_2\G_2(x',0)=\cdots = \m_m\G_m(x',0).
  \end{equation*}
  We set
  \begin{equation*}
  \G(x)= \m_{k(\ell)}\G_{k(\ell)}(x-\ell\mathbf{e}_d) \quad \text{if
  }\ell\leq x_d\leq \ell+1,
  \end{equation*}
  and then $\G\in H^1(\O_1)$ by the above observation. Moreover, $\G$
  is the ground state for $P_1^N$, unique up to a constant. We apply
  Lemma~2.1 to $\f/\G$, and we have
  \begin{align*}
    \frac{1}{L^2}\norm{\f}_{L^2(\O_1)}^2
    &\leq \frac{1}{L^2}(\sup\G)^2 \norm{\f/\G}^2_{L^2(\O_1)} \\
    &\leq \frac{(\sup\G)^2 }{L}\norm{\C(\f/\G)}_{L^2(S)}^2
    + (\sup\G)^2 \norm{\nabla(\f/\G)}_{L^2(\O_1)}^2 \\
    &\leq
    \biggpare{\frac{\sup\G}{\inf\G}}^2\frac{1}{L}\norm{\C\f}_{L^2(S)}^2
    +c_1 (\sup\G)^2 Q_1(\f,\f),
  \end{align*}
  where we have used Lemma~2.3 in the last inequality. The claim
  follows immediately.
\end{proof}

We next consider $P_0=-\triangle +W_0$ on $L^2(\O_0)$ and its
Dirichlet-to-Neumann operator. As in Theorem~\ref{th:4}, we suppose
\begin{equation*}
\a=\inf \s(P_0^N) >0.
\end{equation*}
We set
\begin{equation*}
P'_0= -\triangle +W_0 \quad\text{on }L^2(\O_0) \text{ with }
\mathcal{D}((P_0')^{1/2})=\bigset{\f\in H^1(\O_0)}{\C\f=0},
\end{equation*}
where $\C$ is the trace operator from $H^1(\O_1)$ to $L^2(S)$.  $P'_0$
defines a self-adjoint operator, and each $\f\in \mathcal{D}(P'_0)$
satisfies Dirichlet boundary conditions on $S$ and Neumann boundary
conditions on $\pa\O_0\setminus S$.  Let $\l<\a$. By a standard
argument of the theory of elliptic boundary value problems (see, e.g.,
Folland \cite{F}), for any $g\in H^{3/2}(S)$, there exists a unique
$\g\in H^2(\O_0)$ such that
\begin{equation}\label{EEq0}
  (-\triangle +W_0-\l)\g =0, \qquad \C\g=g
\end{equation}
and that satisfies Neumann boundary conditions on $\pa\O_0\setminus
S$. Then, the map
\begin{equation*}
T(\l)\ :\ g\mapsto \C(\pa_\n\g)\in H^{1/2}(S)
\end{equation*}
defines a bounded linear map from $H^{3/2}(S)$ to $H^{1/2}(S)$, where
$\pa_\n=\pa/\pa x_d$ is the outer normal derivative on $S$.  We
consider $T(\l)$ as an operator on $L^2(S)$, and it is called the
{\it Dirichlet-to-Neumann operator}.
\begin{lem}
  $T(\l)$ is a symmetric operator. Moreover, if $\l_0<\a$ then
  $T(\l)\geq \e$ for $0\leq\l\leq\l_0$ with some $\e>0$.
\end{lem}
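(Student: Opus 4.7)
The plan is to recast $\jap{T(\l) g, g}$ as an integral on $\Omega_0$ via Green's identity, and then to exploit the spectral gap $\inf \s(P_0^N) = \a > \l_0$ together with a trace inequality to obtain a positivity bound that is uniform in $\l \in [0, \l_0]$.

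For the symmetry assertion, I would fix $g_1, g_2 \in H^{3/2}(S)$, let $\c_1, \c_2 \in H^2(\Omega_0)$ be the associated solutions of \eqref{EEq0}, and apply Green's second identity to the pair on $\Omega_0$. Since both $\c_i$ satisfy the same real equation $(-\triangle + W_0 - \l)\c_i = 0$ and Neumann conditions on $\pa\Omega_0 \setminus S$, the bulk terms cancel and only the boundary integral on $S$ survives, yielding
\[
\int_S \bigbrac{(\pa_\n \c_1)\bar{g_2} - g_1 \overline{(\pa_\n \c_2)}}\, d\s' = 0,
\]
which is precisely $\jap{T(\l) g_1, g_2}_{L^2(S)} = \jap{g_1, T(\l) g_2}_{L^2(S)}$.

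For the positivity, I would first test \eqref{EEq0} against $\bar\c$ and integrate by parts, using the Neumann condition on $\pa\Omega_0 \setminus S$ and $\C\c = g$ on $S$, to obtain the identity
\[
\jap{T(\l) g, g}_{L^2(S)} = \int_{\Omega_0} \bigpare{|\nabla\c|^2 + (W_0 - \l)|\c|^2}\, dx.
\]
Since the quadratic form of $P_0^N$ is bounded below by $\a$ on all of $H^1(\Omega_0)$, applying this bound to $\c$ gives $\jap{T(\l) g, g} \geq (\a - \l_0)\norm{\c}^2_{L^2(\Omega_0)}$. Combining it with the crude gradient bound $\jap{T(\l) g, g} \geq \norm{\nabla\c}^2_{L^2} - (\norm{W_0}_\infty + \l_0)\norm{\c}^2_{L^2}$ through a suitable convex combination yields
\[
\jap{T(\l) g, g} \geq c\, \norm{\c}^2_{H^1(\Omega_0)}
\]
with $c > 0$ depending only on $\a - \l_0$ and $\norm{W_0}_\infty$. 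Closing with the standard trace inequality $\norm{g}^2_{L^2(S)} \leq C_T \norm{\c}^2_{H^1(\Omega_0)}$ then produces $T(\l) \geq \e := c/C_T > 0$.

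The main subtle point I expect is uniformity in $\l \in [0, \l_0]$: the weight in the convex combination must be chosen once and for all in terms of $\a - \l_0$ rather than $\a - \l$, and it is precisely the hypothesis $\l_0 < \a$ that keeps this weight, and hence $\e$, strictly positive.
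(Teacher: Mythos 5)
Your proof is correct and follows essentially the same route as the paper: Green's second identity for symmetry, then the Green/integration-by-parts identity $\jap{T(\l)g,g}=Q_0(\c,\c)-\l\norm{\c}^2$ followed by the spectral gap $\a>\l_0$ and the trace inequality for positivity. The only difference is that you make explicit, via the convex-combination argument, the step that the paper compresses into ``the form is equivalent to $\norm{\cdot}^2_{H^1(\O_0)}$ since $\l_0<\a$.''
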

\begin{proof}
  Let $\f,\g\in H^2(\O_0)$ such that $\C\f=f$, $\C\g=g$, and
  \begin{equation*}
  (-\triangle +W_0-\l) \f = (-\triangle +W_0-\l) \g =0
  \end{equation*}
  with Neumann boundary conditions on $\pa \O_0\setminus S$. By
  Green's formula we have
  \begin{align*}
    0 &= \jap{(-\triangle +W_0-\l)\f,\g} - \jap{\f, (-\triangle +W_0-\l) \g} \\
    &= -\int_S \pa_\n\f\cdot \overline\g +\int_S \f\cdot \pa_\n \overline\g
    =-\jap{T(\l)f,g}_{L^2(S)}+\jap{f,T(\l)g}_{L^2(S)},
  \end{align*}
  and hence $T(\l)$ is symmetric. Similarly, we have
  \begin{align*}
    0 &= \jap{(-\triangle +W_0-\l) \f ,\f} \\
    &= -\int_S \pa_\n\f\cdot\overline\f +\int_{\O_0} |\nabla\f|^2
    + \int_{\O_0} (W_0-\l)|\f|^2 \\
    &=-\jap{T(\l)f,f} +Q_0(\f,\f)-\l\norm{\f}^2,
  \end{align*}
  where $Q_0(\f,\f)=\int_{\O_0}\bigpare{|\nabla\f|^2+W_0|\f|^2}dx$.
  Hence, we learn
  \begin{equation*}
  \jap{T(\l)f,f} =Q_2(\f,\f)-\l\norm{\f}^2 \geq
  Q_0(\f,\f)-\l_0\norm{\f}^2.
  \end{equation*}
  The form in the right hand side is equivalent to
  $\norm{\f}^2_{H^1(\O_0)}$ since $\l_0<\a$. Hence, it is bounded from
  below by $\e\norm{f}^2_{L^2(S)}$ with some $\e>0$ by virtue of the
  boundedness of the trace operator from $H^1(\O_0)$ to $L^2(S)$.
\end{proof}

We note that $T(\l)$ extends to a self-adjoint operator on $L^2(S)$ by
the Friedrichs extension, though we do not use the fact in this paper.
\begin{proof}[Proof of Theorem~\ref{th:4}]
  Let $\f$ be the ground state of $P^N$ on $\O$ with the ground state
  energy $\l\geq 0$.  If $\l\geq\l_0>0$ with some fixed $\l_0$
  (independently of $L$), then the statement is obvious, and hence we
  may assume $0\leq\l\leq\l_0<\a=\inf\s(P^N_0)$ without loss of
  generality.

  Let $f=\C\f\in H^{3/2}(S)$. Since $\f$ satisfies Neumann boundary
  conditions on $\pa\O_0\setminus S$, we learn $\pa_\n\f\lceil_S =
  T(\l)\f$.  On the other hand, by Green's formula, we have
  \begin{align*}
    \int_{\O_1} P^N\f\cdot\overline\f &= \int_S
    \pa_n\f\cdot\overline\f
    +\int_{\O_1} |\nabla\f|^2 +W_1|\f|^2 \\
    &=\jap{T(\l)f,f}_{L^2(S)} +Q_1(\f,\f) \\
    &\geq \e\norm{f}^2_{L^2(S)} +Q_1(\f,\f)
  \end{align*}
  by Lemma~2.6. Now, we apply Lemma~2.5 to learn that the right hand
  side is bounded from below by $(1/c_2L^2)\norm{\f}^2_{L^2(\O_1)}$.
  Since $P^N\f=\l\f$ and $\norm{\f}_{L^2(\O_1)}\neq 0$, this implies
  $\l \geq 1/c_2L^2$ for sufficiently large $L$.
\end{proof}


\section{Proof of main theorems}
\label{sec:proof-main-theorems}
%
%
\begin{floatingfigure}{.35\textwidth}
  \begin{center}
    \includegraphics[width=.35\textwidth]{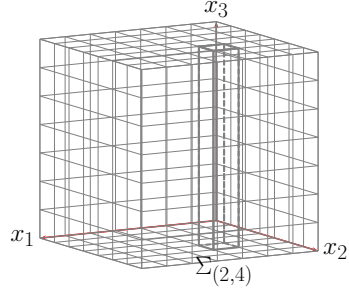}
  \end{center}
  \caption{Chopping the cube into strips}\label{fig:1}
\end{floatingfigure}
%
Here, we mainly discuss the proof of Theorems~\ref{th:1}
and~~\ref{th:2}, and we prove Theorem~\ref{th:3} at the end of the
section. We thus suppose Assumption~A with either $m<M$ or that there
exists $k,k'$ such that $v_k\underset{d}{\not\sim} v_{k'}$.

For notational simplicity, we assume the reflections of $v_k$ at
$\{x_d=1/2\}$ are included in the possible set of potentials
$\{v_k\}$. This does not change the conditions on $\{v_1,\dots,v_m\}$,
but we might need to add the reflections of $\{v_{m+1},\dots,
v_M\}$. This does not affect the following arguments.

We write
\begin{equation*}
\Lambda =\bigset{p\in\ze^{d-1}}{0\leq p_j\leq L-1, j=1,\dots,d-1}
\end{equation*}
and, for $p\in\Lambda$, we set
\begin{equation*}
\Sigma_p =\bigcup_{k=0}^{L-1} C_1((p,k))
\end{equation*}
so that $C_L(0)$ is decomposed (see Fig.~\ref{fig:1}) as
\begin{equation*}
C_L(0) =\bigcup_{p\in\Lambda} \Sigma_p
\end{equation*}
which is a disjoint union except for the boundaries of the strips.

For a given $V_\omega$ and $p\in\Lambda$, we consider the restriction of
$H_\omega$ to $\Sigma_p$, i.e.,
\begin{equation*}
\tilde H_p^N =\triangle + V_0 +\sum_{\ell=0}^{L-1} v_{\omega((p,\ell))}
(x-(p,\ell)) \quad \text{on }L^2(\Sigma_p)
\end{equation*}
with Neumann boundary conditions on $\pa \Sigma_p$.  By the standard
Neumann bracketing, we learn
\begin{equation*}
H_{\omega,L}^N \geq \bigoplus_{p\in\Lambda} \tilde H_p^N \quad
\text{on}\quad L^2(C_L(0))\cong\bigoplus_{p\in\Lambda} L^2(\Sigma_p),
\end{equation*}
and hence, in particular,
\begin{equation}\label{Ineq1}
  \inf \s(H_{\omega,L}^N) \geq \min_{p\in\Lambda} \inf\s(\tilde H_p^N).
\end{equation}
Under our assumptions, one of the following holds for each
$p\in\Lambda$:
\begin{description}
  \itemindent-1em
\item[${(a)}_p$:] $\omega((p,\ell))>m$ for some $\ell$, or
  $v_{\omega((p,\ell))} \underset{d}{\not\sim} v_{\omega((p,\ell'))}$ for some
  \newline $\ell,\ell'\in\{0,\dots,L-1\}$.
\item[${(b)}_{p}$:] For all $\ell,\ell'\in\{0,\dots,L-1\}$,
  $\omega((p,\ell))\leq m$ and $v_{\omega((p,\ell))} \underset{d}{\sim}
  v_{\omega((p,\ell'))}$.
\end{description}

We note that the probability of $(b)_p$ to occur is less than
$\m^{-L}$ with some $\m<1$ independent of $L$. Since $\{\omega(\c)\}$
are independent, we have
\begin{equation}
  \label{prob}
  \mathbb{P}\bigpare{(b)_p \text{ holds for some }p\in\Lambda} \leq L^d \m^{-L},
\end{equation}
which is small if $L$ is large. For the moment, then, we suppose
$(a)_p$ holds for all $p\in\Lambda$.

We denote $V^p(x)$ be the potential function of $\tilde H_p^N$ on
$\Sigma_p$.  Let
\begin{equation*}
\hat\Sigma_p = (p+[0,1]^{d-1})\times (\re/(2L\ze))
\end{equation*}
and set $\hat V^p(x) = V^p(x',|x_d|)$ for $x=(x',x_d)\in
(p+[0,1]^{d-1})\times[-L,L)\cong \hat \Sigma_p$, i.e., $\hat V^p$ is
the extension of $\tilde V^p$ by the reflection at $\{x_d=0\}$.  We
note $\hat V^p$ is continuous on $\hat \Sigma_p$.  We now consider
\begin{equation*}
\hat H_p^N =\triangle +\hat V^p \quad \text{on } L^2(\hat \Sigma_p)
\end{equation*}
with Neumann boundary conditions. It is easy to see
\begin{equation}\label{Ineq2}
  \inf\s(\tilde H_p^N) \geq \inf \s(\hat H_p^N).
\end{equation}
In fact, if $\F$ is the ground state of $\tilde H_p^N$, then we extend
$\F$ by reflection to obtain $\hat\F\in H^1(\hat\Sigma_p)$ and we have
\begin{equation*}
\frac{\jap{\hat H_p^N\hat \F,\hat \F}}{\norm{\hat\F}^2}
=\frac{\jap{\tilde H_p^N \F,\F}}{\norm{\F}^2} =\inf\s (\tilde H_p^N)
\end{equation*}
and the claim \eqref{Ineq2} follows by the variational principle.

Since we assume $(a)_p$, $\Sigma_p$ can be decomposed to subsegments
$\Sigma_p= \bigcup_{j=1}^K \Xi_j$ such that each $\Xi_j$ satisfies the
following conditions: We write
\begin{equation*}
\Xi_j = \bigcup_{\ell=0}^\n C_1(p,\k+\ell), \quad \k\in \ze,\ 0\leq
\n< L,
\end{equation*}
and
\begin{equation*}
\hat V^p(x) = v_{\b(\ell)}(x-(p,\ell))\quad \text{for } x\in
C_1(p,\k+\ell), \ \ell\in\{0,\dots, \n\}
\end{equation*}
with $\b(\ell)\in\{1.\dots,M\}$. Then either one of the following
holds
\begin{enumerate}
  \renewcommand{\theenumi}{\roman{enumi}}
  \renewcommand{\labelenumi}{\textbf{{\rm(\theenumi)} }}
\item $\b(0)\in \{m+1,\dots, M\}$; $\b(\ell)\in \{1,\dots, m\}$ for
  $\ell\geq 1$; and $v_{\b(\ell)}\underset{d}{\sim}v_{\b(\ell')}$ for
  $\ell,\ell'\in \{1,\dots,\n\}$.
\item $\b(\ell)\in \{1,\dots, m\}$ for all $\ell$;
  $v_{\b(0)}\underset{d}{\not\sim} v_{\b(1)}$; and
  $v_{\b(\ell)}\underset{d}{\sim}v_{\b(\ell')}$ for $\ell,\ell'\in
  \{2,\dots,\n\}$.
\end{enumerate}

The proof of this claim is an easy combinatorics, though somewhat
lengthy to write down using symbols. We omit the details.

We again decompose $\hat H_p^N$. We denote the restriction of $\hat
H_p^N$ to $\Xi_j$ by $P_j$ on $L^2(\Xi_j)$ with Neumann boundary
conditions.  Then, again by Neumann bracketing, we learn
\begin{equation*}
\hat H_p^N \geq \bigoplus_{j=1}^\k P_j \quad \text{on }
L^2(\hat\Sigma_p)\cong \bigoplus_{j=1}^\k L^2(\Xi_j),
\end{equation*}
and in particular,
\begin{equation}
  \label{Ineq3}
  \inf \s(\hat H_p^N)\geq \min_j \inf \s(P_j).
\end{equation}

Now if (i) holds for $\Xi_j$, then we set $a=1$ and use Theorem~\ref{th:4}
for $P_j$.  Since $\inf \s(H_{\b(0)}^N)>0$ by Assumption~A and $\n\leq
L$, we learn
\begin{equation*}
\inf \s(P_j)\geq \frac{1}{C(\n-1)^2}\geq \frac{1}{C(L-1)^2}.
\end{equation*}
If (ii) holds for $\Xi_j$, then we set $a=2$ and use Theorem~\ref{th:4} for
$P_j$.  Since $v_{\b(0)}\underset{d}{\not\sim}v_{\b(1)}$, we have
$\inf\s(H_{\b(0)\b(1)(d)}^N)>0$. Thus we have
\begin{equation*}
\inf \s(P_j)\geq \frac{1}{C(\n-2)^2} \geq \frac{1}{C(L-2)^2}.
\end{equation*}
Combining these with \eqref{Ineq1}, \eqref{Ineq2} and \eqref{Ineq3},
we conclude
\begin{equation}\label{Ineq4}
  \inf \s(H_{\omega,L}^N)\geq \frac{c_3}{L^2}
\end{equation}
with some $c_3>0$, provided $(a)_p$ holds for all $p\in\Lambda$.
\begin{proof}[Proof of Theorems~\ref{th:1} and~\ref{th:2}]
  For $E>0$, we set
  \begin{equation*}
  \sqrt{\frac{c_3}{E}} <L \leq \sqrt{\frac{c_3}{E}}+1
  \end{equation*}
  so that, by virtue of \eqref{Ineq4},
  \begin{equation*}
  \inf \s(H_{\omega,L}^N)>E
  \end{equation*}
  provided Condition $(a)_p$ holds for all $p\in\Lambda$. As noted in
  \eqref{prob}, the probability of the events that $(b)_p$ holds for
  some $p\in\Lambda$ is bounded by
  \begin{equation*}
  \mathbb{P}\bigpare{(b)_p \text{ for some }p\in \Lambda} \leq L^d
  \m^{-L} \leq c_4 E^{-d/2} e^{-c_5E^{-1/2}}
  \end{equation*}
  with some $c_4,c_5>0$. On the other hand, since the potential
  $V_0+V_\omega$ is uniformly bounded, we have
  \begin{equation*}
  \#\{\text{eigenvalues of }H_{\omega,L}^N\leq \a\}\leq c_6 L^d
  \end{equation*}
  for any $\omega$ with some $c_6>0$. Thus we have
  \begin{align*}
    &L^{-d}\mathbb{E}\bigpare{\#\{\text{e.v. of }H^N_{\omega,L}\leq E\}}
    \leq L^{-d} (c_6 L^d) \mathbb{P}\bigpare{(b)_p \text{for some }p\in\Lambda} \\
    &\hspace{2cm} \leq c_4 c_6 E^{-d/2} e^{-c_5 E^{-1/2}} \leq c_7
    e^{-(c_5-\e)E^{-1/2}}
  \end{align*}
  for $0<\e<c_5$ with some $c_7>0$. By the Neumann bracketing again,
  we have
  \begin{equation*}
  N(E)\leq L^{-d}\mathbb{E}\bigpare{\#\{\text{e.v. of }H^N_{\omega,L}\leq
    E\}} \leq c_7 e^{-(c_5-\e)E^{-1/2}}
  \end{equation*}
  and Theorems~\ref{th:1} and~\ref{th:2} follow immediately from this
  estimate.
\end{proof}

In fact, we have proved
\begin{equation*}
\liminf_{E\to+0} \frac{|\log N(E)|}{E^{-1/2}} >0,
\end{equation*}
and this statement is slightly stronger than~\eqref{LT}.
\begin{proof}[Proof of Theorem~\ref{th:3}]
  (i) This statement is an immediate consequence of Assumption~B and
  Theorem~\ref{th:2}.  We just replace the $x_d$-axis by the $x_j$-axis where
  $v_k\underset{j}{\not\sim}v_\ell$ for some $k,\ell$.

  (ii) We use the ground state transform as in the proof of Lemmas
  2.3--2.5.  Under our conditions, there exist $\m_1,\dots,\m_m>0$
  such that
  \begin{equation*}
  \m_1\G_1(x)=\m_2\G_2(x)=\cdots=\m_m\G_m(x) \quad \text{for }x\in\pa
  C_1(0).
  \end{equation*}
  For given $H_{\omega,L}^N$, we set
  \begin{equation*}
  \F(x)= \m_k\G_k(x) \quad \text{if }x\in C_1(\c) \text{ with }
  \omega(\c)=k.
  \end{equation*}
  Then it is easy to see that $\F$ is the positive ground state of
  $H_{\omega,L}^N$ with the energy 0.  Let $Q(\cdot,\cdot)$ be the
  quadratic form corresponding to $H_{\omega,L}^N$.  For $\f\in
  H^1(C_L(0))$, we set $f=\f/\F$. As in the proof of Lemma~2.3, we
  have
  \begin{equation*}
  Q(\f,\f) = \norm{\F(\nabla f)}^2
  \end{equation*}
  and hence
  \begin{equation*}
  (\inf \F)^2 \norm{\nabla f}^2 \leq Q(\f,\f)\leq (\sup\F)^2
  \norm{\nabla f}^2.
  \end{equation*}
  This implies
  \begin{equation*}
  K^{-2} \frac{\norm{\nabla f}^2}{\norm{f}^2} \leq
  \frac{Q(\f,\f)}{\norm{\f}^2} \leq K^2 \frac{\norm{\nabla
      f}^2}{\norm{f}^2}
  \end{equation*}
  where $K= \max_k \sup (\m_k\G_k) / \min_k \inf (\m_k \G_k)$.  By the
  min-max principle, we learn
  \begin{align*}
    K^{-2} \#\{\text{e.v. of }(-\triangle)_L^N\leq E\}
    &\leq \#\{\text{e.v. of }H_{\omega,L}^N\leq E\} \\
    &\hspace{1cm}\leq K^2 \#\{\text{e.v. of }(-\triangle)_L^N\leq E\}
  \end{align*}
  where $(-\triangle)_L^N$ is the Laplacian on $C_L(0)$ with Neumann
  boundary conditions. Taking the limit $L\to+\infty$, we have
  \begin{equation}
    \label{eq:4}
    K^{-2} c_d E^{d/2} \leq N(E)\leq K^2 c_d E^{d/2},    
  \end{equation}
  where $c_d$ is the volume of the unit ball in $\re^d$. This
  completes the proof of Theorem~\ref{th:3}. 
  \end{proof}


\section{Application to random displacement models}
\label{sec:appl-rand-displ}
We now consider a model recently studied by Baker, Loss and Stolz
in~\cite{BLS1, BLS2}. Combining their results with Theorem~\ref{th:1},
we show that this model exhibits Lifshitz singularities at the ground
state energy.

We consider a random Schr{\"o}dinger operator of the form:
\begin{equation*}
H_\omega=-\triangle +V_\omega \quad\text{on } L^2(\re^d)
\end{equation*}
where
\begin{equation*}
V_\omega(x)=\sum_{\c\in\ze^d} q(x-\c-\omega(\c))
\end{equation*}
with i.i.d.\ random variables $\{\omega(\c)\,|\,\c\in\ze^d\}$ which take
values in $C_1(0)$.
\begin{ass}
  (1) There exists $\d\in(0,1/2)$ such that $\omega(\c)$ takes values
  in a finite set
  \begin{equation*}
    \Theta \subset \bigset{x\in\re^d}{\d\leq x_j\leq 1-\d,\ \forall
      j\in\{1,\dots,d\}}.
  \end{equation*}
  Moreover
  \begin{equation*}
    \Theta\supset \Delta = \bigset{x\in\re^d}{x_j=\d\text{ or }1-\d,\
      \forall j\in\{1,\dots,d\}}
  \end{equation*}
  and $\mathbb{P}(\omega(\c)=x)>0$ for $x\in \Delta$. \newline (2)
  $q\in C_0(\re^d)$ and it is supported in $\bigset{x}{|x_j|\leq \d,
    j\in\{1,\dots,d\}}$.  Moreover, $q$ is symmetric about
  $\{x\,|\,x_j=0\}$, $j=1,\dots, d$. \newline (3) Let
  $H_q^N=-\triangle +q$ on $L^2(\{|x|\leq 1\})$ with Neumann boundary
  conditions, and let $\phi$ be the ground state. Then, $\phi$ is not
  a constant outside $\mbox{Supp }q$. Note that this is relevant only
  if the ground state energy is 0.
\end{ass}
%
\begin{figure}[h]
  \begin{center}
    \subfigure[A typical random configuration]{\includegraphics[width=.3\textwidth]{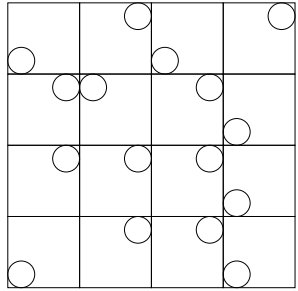}}\hskip3cm
    \subfigure[The minimizing configuration]{\includegraphics[width=.3\textwidth]{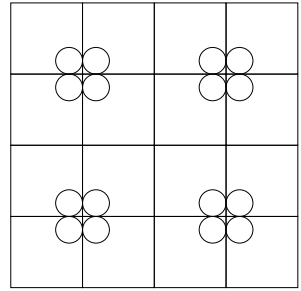}}
  \end{center}
  \caption{An example in two dimensions }\label{fig:2}
\end{figure}
%
Let $H_{1,\b}^N=-\triangle +q(x-\b)$ on $L^2(C_1(0))$ with Neumann
boundary conditions, where $\b\in\Theta$. Baker, Loss and Stolz
\cite{BLS1} showed that $\inf\s(H_{1,\b}^N)$ takes its minimum (with
respect to $\b$) if and only if $\b\in \Delta$.\\
In particular, they showed that for $H^N_{\omega,2\ell}$ the Neumann
restriction of $H_\omega$ to $C_{2\ell}(0)$ the minimal value of the
ground state energy was obtained for clustered configuration (see
Fig~\ref{fig:2}).
%
\begin{figure}[h]
  \begin{center}
    \subfigure[The minimal $2\times 2$ configurations]{\includegraphics[width=.45\textwidth]{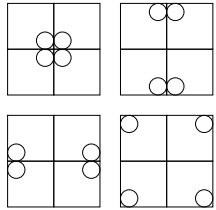}}\hskip1cm
    \subfigure[Other  $2\times 2$ configurations]{\includegraphics[width=.45\textwidth]{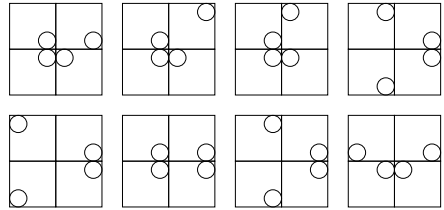}}
  \end{center}
  \caption{$2\times 2$ configurations in two dimensions}\label{fig:3}
\end{figure}
%

We cannot directly apply our result to this model, since $q(x-\b)$ is
not symmetric for $\b\in\Delta$. However, they also showed that if we
consider the operator $H_\omega$ restricted to $C_2(0)$ and if $d\geq 2$,
then the minimum is attained by $2^d$ symmetric configurations, which
are equivalent to each other by translations (see~\cite{BLS2} and
Fig.~\ref{fig:3}). Thus, we can apply our results by considering
$H_\omega$ as a $2\ze^d$-ergodic random Schr{\"o}dinger operators, i.e., by
considering $C_2(0)$ as the unit cell.  Then, this model satisfies
Assumption~A with $M=(\#\Theta)^{2^d}$ and $m=2^d$.
\begin{thm}
  \label{th:5}
  Let $d\geq 2$, and suppose Assumption~C for some $\d\in(0,1/2)$.
  Then,~\eqref{LT} holds at the bottom of the spectrum of $H_\omega$, a.s.
\end{thm}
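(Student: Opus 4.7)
The plan is to bring Theorem~\ref{th:1} to bear on $H_\omega$ by reinterpreting it as a generalized alloy type operator with a doubled unit cell. Since $\{\omega(\gamma)\}_{\gamma\in\ze^d}$ is i.i.d., it is in particular $2\ze^d$-ergodic, so I would regroup the $2^d$ sites of each $2\times\cdots\times 2$ block into a single super-site. Isotropically rescaling $x\mapsto x/2$ sends $C_2(0)$ to $C_1(0)$ and only multiplies $-\triangle$ by $1/4$, so the Lifshitz exponent is preserved. In the new coordinates the model lies exactly in the framework of Section~\ref{sec:model}, with $V_0\equiv 0$ and $M=(\#\Theta)^{2^d}$ single site potentials
\begin{equation*}
\tilde v_k(y) = \sum_{\tau\in\{0,1\}^d} q\bigl(2y-\tau-\omega^{(k)}_\tau\bigr), \quad k=1,\dots,M,
\end{equation*}
one per possible configuration $(\omega^{(k)}_\tau)_{\tau\in\{0,1\}^d}$ of the $2^d$ displacements in a block; by Assumption~C(1) each $\tilde v_k$ has support strictly inside $C_1(0)$.

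The crucial structural input is the Baker--Loss--Stolz theorem recalled just before Theorem~\ref{th:5}: among these $M$ configurations, exactly $m=2^d$ attain the minimum Neumann ground state energy on the block, namely the ``clustered'' configurations indexed by a vertex $v\in\{0,1\}^d\simeq\ze^d/2\ze^d$, in which each atom in the block is sent to the corner of its own cell nearest to $v$ modulo $2\ze^d$. After the shift of $V_0$ normalizing \eqref{eq:1}, these $m$ potentials give $\inf\s(H_k^N)=0$ on $C_1(0)$ with Neumann conditions, while the other $M-m$ give a strictly positive ground state energy; this is Assumption~A(2).

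For Assumption~A(3), a direct computation of atom positions shows that for a cluster pinned at $v=(v_1,\dots,v_d)\in\{0,1\}^d$, the $2^d$ atoms sitting inside $C_2(0)=[0,2]^d$ have $j$-th coordinate equal to $\delta$ or $2-\delta$ when $v_j=0$, and equal to $1-\delta$ or $1+\delta$ when $v_j=1$. In either case this set is invariant under the reflection $x_j\mapsto 2-x_j$, so combined with the symmetry of $q$ from Assumption~C(2) this forces $\tilde v_k$ (after rescaling) to be symmetric about every midplane $\{y_j=1/2\}$, $j=1,\dots,d$. Assumption~A(1) is trivial since $V_0\equiv 0$, and actually the stronger Assumption~B holds.

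Finally, for $d\geq 2$ one has $\#\Theta\geq\#\Delta=2^d\geq 4$, hence $M=(\#\Theta)^{2^d}\geq 4^4>2^d=m$, so $m<M$ and Theorem~\ref{th:1} applies, yielding \eqref{LT}. The most delicate point in the argument is the symmetry verification: for a cluster ``pinned at a corner'' vertex $v$, the $2\ze^d$-periodic identification unfolds the single cluster into $2^d$ bumps sitting near every corner of $C_2(0)$, and it is precisely this unfolding that supplies the reflection symmetry across each midplane that is needed to fit inside Assumption~A.
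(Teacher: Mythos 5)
Your proposal is correct and follows the paper's own approach: reinterpret $H_\omega$ as a $2\ze^d$-ergodic generalized alloy type operator with unit cell $C_2(0)$, invoke the Baker--Loss--Stolz result (valid for $d\geq 2$) to identify the $m=2^d$ symmetric clustered minimizers, and apply Theorem~\ref{th:1} with $M=(\#\Theta)^{2^d}>m$; your explicit atom-position computation usefully fills in the reflection-symmetry verification that the paper delegates to a citation of \cite{BLS2}. One small slip: the parenthetical claim that Assumption~B holds is not quite right, since Assumption~B by definition requires $m=M$ whereas here $m<M$; what you actually verified is the full coordinate-plane reflection symmetry of the $m$ minimizing potentials, which suffices for Assumption~A, and Theorem~\ref{th:1} applies as you conclude.
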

\noindent We note that if $d=1$, this result does not hold, and the
IDS may have logarithmic singularity at the bottom of the spectrum
(\cite{BLS2}). In view of our results, such singularities can occur
for the lack of symmetry of the minimizing configurations.


%
\section{The alloy type model studied in~ \cite{KN1}}
\label{sec:alloy-type-model}
In a previous paper on Lifshitz tails for sign indefinite alloy type
random Schr{\"o}dinger operators~\cite{KN1}, we studied the
model~\eqref{eq:5} for a single site potential $V$ satisfying the
reflection symmetry Assumption B.\\
We now recall some of the results of that work. Let the support of the
random variables $(\omega_\gamma)_\gamma$ be contained in $[a,b]$ and
assume both $a$ and $b$ belong to the essential support of the random
variables.\\
Consider now the operator $H_\lambda^N=-\Delta+\lambda V$ with Neumann
boundary conditions on the cube $C_1(0)=[0,1]^d$. Its spectrum is
discrete, and we let $E_-(\lambda)$ be its ground state energy. It is
a simple eigenvalue and $\lambda\mapsto E_-(\lambda)$ is a real
analytic concave function defined on $\R$. Let $E_-$ be the infimum of
the almost sure spectrum of $H_\omega$ then
\begin{Pro}[\cite{KN1}]
  \label{pro:1}
  Under Assumption B,
  \begin{equation*}
    E_-=\inf(E_-(a),E_-(b)).
  \end{equation*}
\end{Pro}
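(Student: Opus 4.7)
The plan is to prove the two matching inequalities $E_-\geq\min(E_-(a),E_-(b))$ and $E_-\leq\min(E_-(a),E_-(b))$ separately. For the lower bound, I would use Neumann bracketing on the cubes $C_L(0)$. Since $V$ is supported in $C_1(0)$, the Neumann restriction $H^N_{\omega,L}$ dominates $\bigoplus_{\gamma}H^N_{\omega_\gamma,\gamma}$, where $H^N_{\omega_\gamma,\gamma}$ denotes the Neumann realization of $-\Delta+\omega_\gamma V(\cdot-\gamma)$ on $C_1(\gamma)$; each summand has ground state energy $E_-(\omega_\gamma)$. Hence $\inf\sigma(H^N_{\omega,L})\geq\inf_{\lambda\in[a,b]}E_-(\lambda)$, and the concavity of $\lambda\mapsto E_-(\lambda)$ forces the minimum over $[a,b]$ to be attained at an endpoint. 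Letting $L\to\infty$ yields the lower bound.

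For the upper bound, I would show $E_-(a),E_-(b)\in\Sigma$. The crucial step is to identify the Neumann ground state energy $E_-(\lambda)$ with $\inf\sigma(P_\lambda)$, where $P_\lambda=-\Delta+\lambda V_{\text{per}}$ and $V_{\text{per}}=\sum_\gamma V(\cdot-\gamma)$. The bound $E_-(\lambda)\leq\inf\sigma(P_\lambda)$ is automatic from Neumann bracketing. For the reverse, Assumption~B forces the positive Neumann ground state $\psi_\lambda$ on $C_1(0)$ to be even about each hyperplane $\{x_j=1/2\}$; together with its Neumann boundary condition, this makes the periodic extension of $\psi_\lambda$ to $\R^d$ a $C^1$ positive generalized eigenfunction of $P_\lambda$ at the energy $E_-(\lambda)$, and the Allegretto--Piepenbrink theorem gives $\inf\sigma(P_\lambda)=E_-(\lambda)$. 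Since $a$ and $b$ lie in the essential support of the distribution of $\omega_\gamma$, the standard Kirsch--Martinelli characterization of the almost sure spectrum gives $\sigma(P_a)\cup\sigma(P_b)\subset\Sigma$, and hence $\min(E_-(a),E_-(b))\in\Sigma$.

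The main obstacle is the identification $E_-(\lambda)=\inf\sigma(P_\lambda)$: without Assumption~B the Neumann ground state on $C_1(0)$ does not generally admit a regular periodic extension and one has strict inequality in general. Once this symmetry-based step is secured, the remainder is routine; in particular, placing $E_-(a)$ and $E_-(b)$ in $\Sigma$ reduces to localizing cut-offs of the periodic ground state of $P_a$ (resp.\ $P_b$) on large boxes where $\omega$ is uniformly close to the constant value $a$ (resp.\ $b$), an event of positive probability by the essential support assumption.
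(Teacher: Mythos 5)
Your two-sided strategy (Neumann bracketing for the lower bound, identification of $E_-(\lambda)$ with $\inf\sigma(P_\lambda)$ via the reflection-symmetric periodic extension plus the Kirsch--Martinelli characterization for the upper bound) is precisely the approach the present paper alludes to in the Remark of Section~1 and that is carried out in \cite{KN1}; the structure of your argument is sound and the role of Assumption~B in making the periodic extension $C^1$ is exactly the right point to isolate.

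There is, however, a direction error in the key identification. The Allegretto--Piepenbrink theorem characterizes $\inf\sigma(P_\lambda)$ as the supremum of energies admitting a positive (super)solution; hence the existence of the positive periodic solution $\tilde\psi_\lambda$ at energy $E_-(\lambda)$ yields only $E_-(\lambda)\leq\inf\sigma(P_\lambda)$ --- the \emph{same} inequality you already obtained by Neumann bracketing, not the reverse one you need. To conclude $\inf\sigma(P_\lambda)\leq E_-(\lambda)$ one must exploit the fact that $\tilde\psi_\lambda$ is not merely positive but \emph{bounded and $\ze^d$-periodic}: either invoke Schnol's theorem (a polynomially bounded generalized eigenfunction places its energy in the spectrum), or the Floquet--Bloch decomposition (the positive $\ze^d$-periodic solution is the Floquet ground state at $\theta=0$, where the lowest band attains its minimum), or simply run the Weyl-sequence argument you already sketch at the end --- cutting off $\tilde\psi_\lambda$ on a large box gives an approximate eigenfunction of $P_\lambda$ at $E_-(\lambda)$. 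Any of these closes the gap; as written, the Allegretto--Piepenbrink invocation does not.
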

\noindent As for Lifshitz tails, we proved
\begin{thm}[\cite{KN1}]
  \label{thr:2}
  Suppose Assumption B is satisfied. Assume moreover that
  \begin{equation}
    \label{eq:11}
    E_-(a)\not=E_-(b).
  \end{equation}
  Then
  \begin{equation*}
    \limsup_{E\to E_-^+}\frac{\log|\log N(E)|}{\log(E-E_-)}\leq
    -\frac d2-\alpha_+
  \end{equation*}
  where we have set $c=a$ if $E_-(a)<E_-(b)$ and $c=b$ if
  $E_-(a)>E_-(b)$, and
  \begin{equation*}
    \alpha_+=-\frac12\liminf_{\varepsilon\to0}\frac{\log|\log\mathbb{P}(\{
      |c-\omega_0|\leq\varepsilon\})|}{\log\varepsilon}\geq0.
  \end{equation*}
\end{thm}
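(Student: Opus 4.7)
The approach is the Kirsch--Simon strategy for Lifshitz tails, adapted to the sign-indefinite alloy model through the concavity of $\l\mapsto E_-(\l)$ and an operator-theoretic reduction to the monotonous perturbation case. Without loss of generality I take $E_-(a)<E_-(b)$, so $c=a$, and shift $V_0$ so that $E_-=E_-(a)=0$. Since $E_-$ is real analytic, concave, and $E_-(a)<E_-(b)$, its minimum on $[a,b]$ is attained uniquely at $\l=a$, and by concavity necessarily $E_-'(a)>0$. Write $\d:=E-E_->0$ and let $H^N_{\omega,L}$ denote the Neumann restriction of $H_\omega$ to $C_L(0)$. Since the potential is uniformly bounded, Neumann bracketing together with the $O(L^d)$ bound on the number of low-lying eigenvalues gives
\[
N(E)\leq C\,\mathbb{P}\bigpare{\inf\s(H^N_{\omega,L})\leq E}.
\]

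The heart of the proof is the mean-field estimate
\[
\bigabs{\inf\s(H^N_{\omega,L})-E_-(\bar\omega_L)}\leq \frac{C_1}{L^2},\qquad \bar\omega_L:=L^{-d}\sum_{\c}\omega_\c,
\]
where the sum runs over the $L^d$ lattice sites $\c$ in $C_L(0)$. The upper bound is easy: the periodic ground state $\F_{\bar\omega_L}$ of $-\triangle+V_0+\bar\omega_L V$ is a legitimate Neumann trial function, and by reflection symmetry the single cell integrals $\int V(x-\c)\abs{\F_{\bar\omega_L}}^2 dx$ do not depend on $\c$, so the linear cross term $\sum_{\c}(\omega_\c-\bar\omega_L)\int V(x-\c)\abs{\F_{\bar\omega_L}}^2 dx$ vanishes identically. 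The matching lower bound is the main technical point: naive Neumann bracketing gives only $\inf\s(H^N_{\omega,L})\geq \min_\c E_-(\omega_\c)$, which is much too weak when $V$ is sign-indefinite. I invoke the concavity-plus-monotonous-reduction argument of~\cite{KN1}: using the concavity of $E_-(\l)$ and a quadratic form comparison, one bounds $H_\omega$ from below by a sign-definite alloy whose ground state in $C_L(0)$ agrees with $E_-(\bar\omega_L)$ up to a gap term of order $L^{-2}$.

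With the mean-field estimate in hand, the event $\inf\s(H^N_{\omega,L})\leq E$ forces $E_-(\bar\omega_L)\leq \d+C_1L^{-2}$, hence (using $E_-'(a)>0$) $\bar\omega_L-a\leq C_2(\d+L^{-2})$. Choosing $L\sim\d^{-1/2}$ so that $L^{-2}\sim\d$ gives $\bar\omega_L-a\leq C_3\d$ and $L^d\sim\d^{-d/2}$. A Chebyshev exponential estimate
\[
\mathbb{P}\bigpare{\bar\omega_L-a\leq C_3\d}\leq e^{\l C_3\d L^d}\bigpare{\mathbb{E}\,e^{-\l(\omega_0-a)}}^{L^d}\qquad(\l>0),
\]
combined with the split $\mathbb{E}\,e^{-\l(\omega_0-a)}\leq e^{-1}+2\,\mathbb{P}\bigpare{\abs{\omega_0-a}\leq 1/\l}$, and an optimization of the auxiliary scale $\e_*=1/\l$ against the tail behavior captured by $\a_+$, yields
\[
-\log\mathbb{P}\bigpare{\bar\omega_L-a\leq C_3\d}\geq c'\,\d^{-d/2-\a_+-o(1)}\qquad(\d\to 0^+),
\]
from which the Lifshitz bound follows by combining with the first step.

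The main obstacle is the mean-field lower bound in the second step: transferring an a priori lower bound on the ground state of a sign-indefinite alloy in a large box to the ground state energy at the average value of the random parameters is delicate, and relies on both the concavity of $E_-$ and the full reflection symmetry of Assumption~B. This is precisely the ingredient that breaks down for the Bernoulli models of the present paper, and is what motivates the strip-Poincar{\'e}/Dirichlet-to-Neumann approach developed in Section~\ref{sec:lower-bounds-ground}.
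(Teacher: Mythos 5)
The paper does not actually reprove Theorem~\ref{thr:2} --- it is quoted verbatim from \cite{KN1} --- so there is no in-text argument to compare against; the only information the paper gives is the remark in Section~\ref{sec:introduction} that \cite{KN1} relied on ``the concavity of the ground state energy with respect to the random parameters'' and ``an operator theoretical trick to reduce the problem to monotonous perturbation case.'' Your frame (Kirsch--Simon strategy, Neumann bracketing, a constraint on $\bar\omega_L$, large-deviation estimate encoding $\alpha_+$) is consistent with that description, and the Chernoff computation at the end is essentially routine. The problem is the lemma you put at ``the heart of the proof.''

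The claimed two-sided mean-field estimate
\begin{equation*}
\bigabs{\inf\s(H^N_{\omega,L})-E_-(\bar\omega_L)}\leq \frac{C_1}{L^2}
\end{equation*}
is false as stated. The upper bound $\inf\s(H^N_{\omega,L})\leq E_-(\bar\omega_L)$ is fine (your periodic trial-function argument is correct under Assumption~B). The lower bound fails. Take $\omega_\c=b$ on the layer of cells at distance $\leq 1$ from $\pa C_L(0)$ and $\omega_\c=a$ on the bulk. Then $\bar\omega_L-a\asymp L^{-1}$, so $E_-(\bar\omega_L)\asymp E_-'(a)L^{-1}$. But the trial function $\psi=\eta\,\varphi_a$, where $\eta(x)=\tilde\eta(x/L)$ with $\tilde\eta$ a smooth bump vanishing near $\pa[0,1]^d$ and $\varphi_a$ is the reflection-symmetric $\ze^d$-periodic extension of the Neumann ground state at $\l=a$, gives (by the ground state transform as in Lemma~2.3)
\begin{equation*}
\frac{\jap{H^N_{\omega,L}\psi,\psi}}{\norm{\psi}^2}
= \frac{\norm{\varphi_a\,\nabla\eta}^2 + \sum_\c(\omega_\c-a)\jap{V(\cdot-\c)\,\eta\varphi_a,\eta\varphi_a}}{\norm{\eta\varphi_a}^2}
\leq \frac{C}{L^2},
\end{equation*}
since the potential sum vanishes on the support of $\eta$. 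Hence $\inf\s(H^N_{\omega,L})=O(L^{-2})\ll E_-(\bar\omega_L)-C_1L^{-2}$ for large $L$, contradicting your lower bound. The actual \cite{KN1} argument therefore cannot assert this inequality; it must use the concavity/operator reduction to produce a weaker, conditional statement of Temple type (an implication of the form ``\emph{if} $\inf\s(H^N_{\omega,L})$ is below a fixed small multiple of the Neumann gap $L^{-2}$, \emph{then} a constraint on $\bar\omega_L$ holds''), in which the configuration above is excluded by the gap threshold rather than by the inequality itself. By labelling the lower bound as what you ``invoke'' from \cite{KN1} without proof, you have not closed this gap, and the version you wrote down is not what that reference could be providing.

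A secondary point: in the optimisation $\mathbb{E}\,e^{-\l(\omega_0-a)}\leq e^{-1}+2\,\mathbb{P}(\abs{\omega_0-a}\leq1/\l)$, the constant $e^{-1}$ is not universally below $1-\mathbb{P}(\omega_0=a)$; as written the bound need not be strictly less than one, so the geometric decay in $L^d$ is not automatic. This is fixable by splitting at a scale $\theta/\l$ for $\theta$ large and using that $a$ lies in the essential support and $\mathbb{P}(\omega_0=a)<1$, but as stated the step is not tight. The main gap, however, is the mean-field lower bound.
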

\noindent The technique developed in~\cite{KN1} did not allow us to
treat the case $E_-(a)=E_-(b)$. Clearly, if the random variables
$(\omega_\gamma)_\gamma$ are non trivial and Bernoulli distributed,
i.e., if $\mathbb{P}(\omega_0=a)+\mathbb{P}(\omega_0=b)=1$ and
$\mathbb{P}(\omega_0=a)>0$, $\mathbb{P}(\omega_0=b)>0$,
Theorem~\ref{th:3} tells us that the Lifshitz tails hold if and only
if $aV\underset{j}{\not\sim}bV$ for some $j\in\{1,\cdots,d\}$
(see~\eqref{eq:14}). So we are just left with the case when the random
variables $(\omega_\gamma)_\gamma$ are not Bernoulli distributed.\\
We prove
\begin{thm}
  \label{thr:3}
  Suppose assumption B is satisfied and that 
  \begin{equation}
    \label{eq:11}
    E_-(a)=E_-(b).
  \end{equation}
  Assume moreover that the i.i.d. random variables
  $(\omega_\gamma)_\gamma$ are not Bernoulli distributed
  i.e. $\mathbb{P}(\omega_0=a)+\mathbb{P}(\omega_0=b)<1$.\\
  Then
  \begin{equation}
    \label{eq:7}
    \limsup_{E\to E_-^+}\frac{\log|\log N(E)|}{\log(E-E_-)}\leq
    -\frac 12.
  \end{equation}
\end{thm}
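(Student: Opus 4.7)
The plan is to adapt the strip-based Neumann bracketing argument of Section~\ref{sec:proof-main-theorems} to the continuum-distribution setting, using the strict concavity of $\lambda\mapsto E_-(\lambda)$ as the new input. Since $\lambda\mapsto E_-(\lambda)$ is real analytic and concave (Proposition~\ref{pro:1}) with $E_-(a)=E_-(b)=0$, analyticity rules out $E_-$ vanishing on any subinterval of $[a,b]$; hence $E_-(\lambda)>0$ for every $\lambda\in(a,b)$. Combined with the non-Bernoulli hypothesis, inner regularity of the distribution $\rho$ of $\omega_0$ yields a compact set $K\subset(a,b)$ and positive constants $p_K,\eta$ with
\begin{equation*}
  p_K\;:=\;\mathbb{P}(\omega_0\in K)\;>\;0,\qquad\eta\;:=\;\inf_{\lambda\in K}E_-(\lambda)\;>\;0.
\end{equation*}
A cube $C_1(\gamma)$ will be called a \emph{$K$-cube} if $\omega_\gamma\in K$; its single-site Neumann ground state energy is then at least $\eta$.

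Decomposing $C_L(0)=\bigcup_{p\in\Lambda}\Sigma_p$ into strips along $\mathbf{e}_d$ and applying Neumann bracketing gives $\inf\sigma(H^N_{\omega,L})\geq\min_{p\in\Lambda}\inf\sigma(\tilde H_p^N)$ as in Section~\ref{sec:proof-main-theorems}. Call a strip \emph{good} if it contains at least one $K$-cube; the $L$ couplings in $\Sigma_p$ being independent, the probability that $\Sigma_p$ is not good equals $(1-p_K)^L$, whence
\begin{equation*}
  \mathbb{P}\bigpare{\exists\,p\in\Lambda:\ \Sigma_p\text{ is not good}}\;\leq\;L^{d-1}\mu^{-L}\qquad\text{with }\mu:=(1-p_K)^{-1}>1.
\end{equation*}
The key deterministic input is then the strip estimate $\inf\sigma(\tilde H_p^N)\geq c/L^2$ whenever $\Sigma_p$ is good, with $c>0$ independent of $L$ and of the configuration. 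Granting this, the Lifshitz conclusion~\eqref{eq:7} follows by the argument at the end of Section~\ref{sec:proof-main-theorems}: choosing $L\sim\sqrt{c/E}$ forces $\inf\sigma(H^N_{\omega,L})>E$ on an event of probability at least $1-L^{d-1}\mu^{-L}$, and combining with the uniform Weyl bound $\#\{\text{e.v.\ of }H^N_{\omega,L}\leq E\}\leq c_6 L^d$ gives $N(E)\leq C\exp(-c'E^{-1/2})$.

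The strip estimate is where I expect the main technical work. Fixing a $K$-cube at $(p,\ell_*)$, one applies Theorem~\ref{th:4} twice -- once to the sub-strip above $(p,\ell_*)$ and once to the one below -- with the $K$-cube playing the role of the barrier region $\Omega_0$, for which the required hypothesis $\inf\sigma(P_0^N)\geq\eta>0$ is automatic. The obstacle is that the cubes in $\Omega_1$ carry arbitrary couplings $\omega_\gamma\in[a,b]$, whose Neumann ground state energies lie in $[0,\eta)$ but are typically nonzero, and the relation $\underset{d}{\sim}$ -- which is only defined for zero-energy species and underlies the stitched ground state construction in Lemmas~2.4 and 2.5 -- is not available. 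The resolution is a perturbed version of Lemmas~2.3--2.5: one builds an approximate stitching of the rescaled single-cube ground states $\mu_\gamma\Gamma_{\omega_\gamma}$ (which are symmetric about $\{x_d=1/2\}$ by Assumption~B), controls the ratios $\inf\Gamma_{\omega_\gamma}/\sup\Gamma_{\omega_\gamma}$ uniformly by Harnack, and absorbs the interface mismatches and the residual term of order $\sup_\gamma E_-(\omega_\gamma)\leq\eta$ into the constant $c$ by shrinking $K$ at the outset.
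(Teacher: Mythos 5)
Your overall skeleton parallels the paper's: choose a set $K$ of couplings with positive probability and positive single-cube ground state energy (the paper takes $K=[a+\varepsilon,b-\varepsilon]$ for $\varepsilon$ small, using the strict concavity of $\lambda\mapsto E_-(\lambda)$ proved in Lemma~\ref{le:2}), split $C_L(0)$ into $L^{d-1}$ strips along $\mathbf{e}_d$, observe that the probability a strip contains no $K$-cube is exponentially small, and conclude via Neumann bracketing and the Weyl bound, provided one has the deterministic strip estimate $\inf\sigma(\tilde H_p^N)\geq c/L^2$ on strips containing a $K$-cube. The probabilistic bookkeeping is fine. The problem is exactly where you flag it: the deterministic strip estimate.

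The ``perturbed stitching'' you propose does not work, and the reason is structural rather than technical. Theorem~\ref{th:4} requires that the ground states of the non-barrier cubes in $\Omega_1$ glue together into a single $H^1$ function $\Gamma$ on the strip (Lemmas~2.4--2.5); this is precisely what the relation $\underset{d}{\sim}$ plus the symmetry in Assumption~A guarantees, and it holds \emph{only} for the zero-energy species $v_1,\dots,v_m$. For a continuum configuration, the non-barrier cubes carry arbitrary couplings $\omega_\gamma\in[a,b]$, whose single-cube ground states $\Gamma_{\omega_\gamma}$ do not match at the interfaces: the mismatch is $O(1)$, not small, and in particular is not controlled by shrinking $K$ (shrinking $K$ only changes which cubes count as barriers, not the couplings on the remaining cubes). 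A function stitched from mismatched boundary values is not even in $H^1(\Omega_1)$, so the ground state transform cannot be applied to it. Nor can you substitute the genuine ground state of $P_1^N$: for a generic configuration with wells of different depths, $\inf\Gamma/\sup\Gamma$ can be exponentially small in $L$, so the constant $c_1$ in Lemma~2.3 would blow up. Also, the bound ``$\sup_\gamma E_-(\omega_\gamma)\leq\eta$'' is simply false for non-barrier cubes ($\eta$ is an \emph{infimum} over $K$, not a bound on the other cubes), so there is no small residual to absorb.

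The paper avoids this entirely via a two-step argument that you are missing. First (Lemma~\ref{le:14}) it applies the Section~2 machinery only to \emph{discrete} configurations $\omega\in\{a,b,a+\varepsilon,b-\varepsilon\}^{Z_L}$ satisfying property~(P); this is exactly the generalized alloy model of Section~3 with $m=2$ zero-energy species $\{aV,bV\}$ and $M=4$, where the stitching argument applies verbatim. Second (Lemma~\ref{le:15}) it extends the lower bound to arbitrary $\omega\in[a,b]^{Z_L}$ satisfying property~(P') by using the \emph{strict concavity of $\omega\mapsto E_{-,L}(\omega)$} (Lemma~\ref{le:2}): any such $\omega$ is a convex combination of discrete configurations satisfying (P), and a concave function evaluated at a convex combination is at least the minimum of its values at the extreme points. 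This convexity step is the key idea that lets the deterministic bound, established only on a finite alphabet, control the continuum model; your proposal has no substitute for it and as written has a genuine gap.
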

\noindent So we show that Lifshitz tails also hold in this case. As
already noted we believe that~\eqref{eq:7} is not optimal and that
$-1/2$ should be replaced by $-d/2$. Moreover, depending on the tail
of the distributions of the random variables $(\omega_\gamma)_\gamma$
near $a$ and $b$, the $\limsup$ in~(\ref{eq:7}) should be a limit, the
inequality should become an equality, the exponent $-1/2$ should be
replaced by $-d/2$ plus a possibly vanishing constant (see Section 0
of~\cite{KN1} for the case $E_-(a)\not=E_-(b)$). \\
\noindent Combining Theorems~\ref{thr:2} and~\ref{thr:3} with the
Wegner estimates obtained in~\cite{Kl:95b,Hi-Kl:01a} and the
multiscale analysis as developed in~\cite{MR2002m:82035}, we learn
\begin{thm}
  \label{thr:1}
  Assume Assumption B. Assume, moreover, that the common distribution
  of the random variables admits an absolutely continuous
  density. Then, the bottom edge of the spectrum of $H_\omega$
  exhibits complete localization in the sense of~\cite{MR2002m:82035}.
\end{thm}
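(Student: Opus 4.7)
The plan is to deduce the result from the bootstrap multiscale analysis of Germinet and Klein \cite{MR2002m:82035}, whose two essential inputs are a Wegner estimate and an initial length scale estimate. Under Assumption~B together with the hypothesis that the common law of the $(\omega_\gamma)$ admits an absolutely continuous density, the Wegner estimate of \cite{Kl:95b,Hi-Kl:01a} supplies the first of these; it therefore remains to produce the initial length scale estimate at the bottom edge $E_-$ of the almost sure spectrum.

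The initial scale estimate is obtained from the Lifshitz tail behaviour of $N$ at $E_-$. The absolute continuity of the density of $\omega_0$ forces $\mathbb{P}(\omega_0=a)+\mathbb{P}(\omega_0=b)<1$, so that the Bernoulli case is excluded, and the hypothesis of Theorem~\ref{thr:3} is met whenever $E_-(a)=E_-(b)$; in the opposite case $E_-(a)\neq E_-(b)$, Theorem~\ref{thr:2} applies. Either way,
$$
\limsup_{E\to E_-^+}\frac{\log|\log N(E)|}{\log(E-E_-)}<0,
$$
so that $N(E_-+\varepsilon)$ decays faster than any power of $\varepsilon$ as $\varepsilon\to 0^+$.

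To convert this tail estimate into an initial scale estimate, I would follow the scheme used in the proof of Theorems~\ref{th:1} and~\ref{th:2}: choose $L$ of order $\varepsilon^{-1/2}$ and use Neumann bracketing to bound $\mathbb{P}(\inf\s(H_{\omega,L}^N)<E_-+\varepsilon)$ by a quantity that is sub-exponentially small in $L$. On the complementary event, a Combes-Thomas estimate yields exponential off-diagonal decay of the finite-volume resolvent of $H_\omega$ at all energies in $[E_-,E_-+\varepsilon/2]$, with a decay rate and probability bound meeting the initial scale hypothesis of \cite{MR2002m:82035}.

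Feeding the Wegner estimate and the initial scale estimate into the Germinet-Klein bootstrap then yields complete localization throughout some interval $[E_-,E_-+\delta)$: almost sure pure point spectrum with exponentially decaying eigenfunctions, together with strong dynamical localization in the SUDEC sense and the other equivalent statements listed in \cite{MR2002m:82035}. The main obstacle is bookkeeping rather than mathematical: one must verify that the alloy-type operator considered here fits the abstract framework of \cite{MR2002m:82035}, i.e.\ check the structural SLI/EDI/NE/GEE properties, all of which are standard for Schr{\"o}dinger operators with uniformly bounded random potentials of this form.
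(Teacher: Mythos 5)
Your proposal matches the paper's argument: the paper likewise derives Theorem~\ref{thr:1} by combining Theorems~\ref{thr:2} and~\ref{thr:3} (with the same observation that an absolutely continuous density rules out the Bernoulli case, so the $E_-(a)=E_-(b)$ situation is covered by Theorem~\ref{thr:3}) with the Wegner estimates of~\cite{Kl:95b,Hi-Kl:01a} and the bootstrap multiscale analysis of~\cite{MR2002m:82035}. Your fleshing out of the standard Lifshitz-tail-to-initial-scale-estimate step is exactly what the paper leaves implicit.
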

\noindent This result improves upon Theorem~0.3 of~\cite{KN1}.
\subsection{The proof of Theorem~\ref{thr:3}}
\label{sec:proof-theorem}
Recall that $H_{\omega,L}^N$ is defined in~(\ref{eq:9}). It is well
known that, at $E$, a continuity point of $N(E)$, the sequence
\begin{equation*}
  N_L^N(E)=\mathbb{E}\left(\frac{\#\{\text{eigenvalues of
      }H_{\omega,L}^N\ \leq E\}}{L^d}\right)
\end{equation*}
is decreasing and converges to $N(E)$ (see e.g.~\cite{FP,K1}). As
\begin{equation}
  \label{eq:7}
  N_L^N(E)\leq
  C\,\mathbb{P}(\{\inf\sigma(H_{\omega,L}^N)\leq E\})
\end{equation}
it is sufficient to prove an upper bound for
$\mathbb{P}(\{\inf\sigma(H_{\omega,L}^N)\leq E\})$ for a well chosen
value of $L$.\\
Define $E_{-,L}(\omega)=\inf\sigma(H_{\omega,L}^N)$. It only depends
on $(\omega_\gamma)_{\gamma\in Z_L}$, where 
\begin{equation*}
Z_L=\bigset{\c\in\ze^d}{0\leq\c_j<L, j=1,\dots,d}.
\end{equation*}
One has
\begin{Le}
  \label{le:2}
  The function $\omega\mapsto E_{-,L}(\omega)$ is real analytic and
  strictly concave on $[a,b]^{Z_L}$.
\end{Le}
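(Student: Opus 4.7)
The plan is to handle analyticity and concavity separately, both via the Rayleigh quotient characterization
\[
E_{-,L}(\omega)=\inf_{\|\phi\|_{L^2(C_L(0))}=1}\langle\phi,H_{\omega,L}^N\phi\rangle.
\]
Since $V_\omega=\sum_{\gamma\in Z_L}\omega_\gamma V(\cdot-\gamma)$ is a bounded, multiplicative perturbation depending linearly on $\omega$, the family $\omega\mapsto H_{\omega,L}^N$ is an entire self-adjoint analytic family of type (A) in Kato's sense, with a common form domain independent of $\omega$. In a bounded cube with Neumann boundary conditions the spectrum is discrete; moreover the ground state is simple, which follows from the Perron--Frobenius/strict positivity argument for ground states of Schr{\"o}dinger operators (ensuring the ground state eigenfunction is strictly positive inside $C_L(0)$ and unique up to a scalar). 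Therefore Kato's analytic perturbation theory gives that $E_{-,L}(\omega)$ is real analytic in a neighborhood of $[a,b]^{Z_L}$.

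Concavity is immediate: each map $\omega\mapsto\langle\phi,H_{\omega,L}^N\phi\rangle$ is affine in $\omega$, so $E_{-,L}$ is an infimum of affine functions and hence concave. For strict concavity, I would argue by contradiction: suppose $E_{-,L}$ is affine on the segment between two distinct points $\omega_1,\omega_2\in[a,b]^{Z_L}$. Taking $\omega_t=t\omega_1+(1-t)\omega_2$ with $t\in(0,1)$ and letting $\phi_t$ be the normalized ground state at $\omega_t$, the equality case in the inequality
\[
E_{-,L}(\omega_t)=t\langle\phi_t,H_{\omega_1,L}^N\phi_t\rangle+(1-t)\langle\phi_t,H_{\omega_2,L}^N\phi_t\rangle\geq tE_{-,L}(\omega_1)+(1-t)E_{-,L}(\omega_2)
\]
forces $\phi_t$ to be a normalized ground state for both $H_{\omega_1,L}^N$ and $H_{\omega_2,L}^N$. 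Subtracting the two eigenvalue equations gives $(V_{\omega_1}-V_{\omega_2})\phi_t=cst\cdot\phi_t$ on $C_L(0)$.

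The last step is the real point: $\phi_t>0$ almost everywhere in $C_L(0)$ by positivity of the ground state, so $V_{\omega_1}-V_{\omega_2}$ must be constant on $C_L(0)$. But $V_{\omega_1}-V_{\omega_2}=\sum_\gamma(\omega_{1,\gamma}-\omega_{2,\gamma})V(\cdot-\gamma)$; picking a coordinate $\gamma_0$ with $\omega_{1,\gamma_0}\neq\omega_{2,\gamma_0}$, on the interior of $C_1(\gamma_0)$ the function reduces to $(\omega_{1,\gamma_0}-\omega_{2,\gamma_0})V(\cdot-\gamma_0)$, so $V$ would be constant on $C_1(0)$. Since $V$ is compactly supported in $C_1(0)$ and continuous, that constant is zero, contradicting the non-triviality of $V$. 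I expect the main (mild) obstacle to be justifying rigorously that the supports $C_1(\gamma)$ only overlap on a measure-zero set and that the strict positivity of $\phi_t$ propagates across those shared faces; both follow from continuity of $\phi_t$ and standard unique continuation / maximum principle arguments.
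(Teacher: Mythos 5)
Your proof is correct, and the analyticity and (ordinary) concavity steps coincide with the paper's. Where you diverge is the strict concavity argument: the paper computes the Hessian of $\omega\mapsto E_{-,L}(\omega)$ explicitly via second-order perturbation theory, obtaining $\sum_{\gamma,\beta}\partial^2_{\omega_\gamma\omega_\beta}E_{-,L}(\omega)a_\gamma\overline{a_\beta} = -2\Re\langle(H_{\omega,L}^N-E_{-,L}(\omega))^{-1}\Pi u_a,\Pi u_a\rangle$ with $u_a=(\sum_\gamma a_\gamma V(\cdot-\gamma))\varphi_L(\omega)$, and shows this is strictly negative by arguing $\Pi u_a\neq 0$ (unique continuation $\Rightarrow$ $\varphi_L$ is nonzero a.e., plus $V$ not constant) together with the spectral-gap estimate $\Pi(H_{\omega,L}^N-E_{-,L}(\omega))^{-1}\Pi\geq c\Pi$. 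Your variational route reaches the same conclusion more economically: equality in concavity along a segment forces the ground state of $H_{\omega_t,L}^N$ to simultaneously be a ground state of $H_{\omega_1,L}^N$ and $H_{\omega_2,L}^N$, whence $(V_{\omega_1}-V_{\omega_2})\phi_t$ is proportional to $\phi_t$, and a.e.\ positivity of $\phi_t$ gives $V_{\omega_1}-V_{\omega_2}\equiv\mathrm{const}$, which is impossible because on the interior of a single cell $C_1(\gamma_0)$ where $\omega_1$ and $\omega_2$ differ this would force $V$ to be constant, hence (by continuity and compact support) $V\equiv 0$. The two arguments ultimately rest on the same facts (simple strictly positive ground state; $V$ not constant), so they have the same strength for the application at hand, which only uses qualitative strict concavity; the paper's Hessian formula would in principle give quantitative information, but it is not used. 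Finally, the ``obstacles'' you flag at the end are not real: the cubes $C_1(\gamma)$ overlap only on faces of measure zero, and one needs only a.e.\ non-vanishing of $\phi_t$ (not strict positivity across faces), which follows directly from the positivity/uniqueness of the Neumann ground state; no separate maximum-principle step is required.
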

\begin{proof}
  Though this is certainly a well known result, for the sake of
  completeness, we give the proof. The ground state being
  simple, $\omega\mapsto E_{-,L}(\omega)$ is real analytic in $\omega$.\\
  As $H_\omega$ depends affinely on $\omega$, by the variational
  characterization of the ground state energy, $E_{-,L}(\omega)$ is the
  infimum of a family of affine functions of $\omega$. So it is concave.\\
  The strict concavity is obtained using perturbation theory. Let
  $\varphi_L(\omega)$ be the unique normalized positive ground state
  associated to $E_{-,L}(\omega)$ and $H_{\omega,L}^N$. The ground
  state energy being simple, this ground state is a real analytic
  function of $\omega$; differentiating once the eigenvalue equation
  and the normalization condition of the ground state, as the ground
  state is normalized and real, one obtains
  \begin{equation}
    \label{eq:10}
    (H_{\omega,L}^N-E_{-,L}(\omega))\partial_{\omega_\gamma}\varphi_L(\omega)
    =\left(\partial_{\omega_\gamma}E_{-,L}(\omega)-V(\cdot-\gamma)\right)
    \varphi_L(\omega)
  \end{equation}
  and
  \begin{equation}
    \label{eq:13}
    \langle\partial_{\omega_\gamma}\varphi_L(\omega),\varphi_L(\omega)\rangle=0.
  \end{equation}
  A second differentiation yields
  \begin{equation*}
    \begin{split}
      (H_{\omega,L}^N-E_{-,L}(\omega))\partial^2_{\omega_\gamma\omega_\beta}
      \varphi_L(\omega)
      &=\partial^2_{\omega_\gamma\omega_\beta}E_{-,L}(\omega)\varphi_L(\omega)
      \\
      &\hskip.5cm+\left(\partial_{\omega_\gamma}E_{-,L}(\omega)-V(\cdot-\gamma)\right)
      \partial_{\omega_\beta}\varphi_L(\omega)\\
      &\hskip1cm+\left(\partial_{\omega_\beta}E_{-,L}(\omega)-V(\cdot-\beta)\right)
      \partial_{\omega_\gamma}\varphi_L(\omega).
    \end{split}
  \end{equation*}
  Hence, using~\eqref{eq:10} and~\eqref{eq:13}, we compute
  \begin{equation*}
    \begin{split}
      \partial^2_{\omega_\gamma\omega_\beta}E_{-,L}(\omega)&= -\langle
      V(\cdot-\gamma)\partial_{\omega_\beta}\varphi_L(\omega),
      \varphi_L(\omega)\rangle -\langle
      V(\cdot-\beta)\partial_{\omega_\gamma}\varphi_L(\omega),
      \varphi_L(\omega)\rangle\\
      &= -2\text{Re}
      \left(\left\langle(H_{\omega,L}^N-E_{-,L}(\omega))^{-1}
          \psi_\beta,\psi_\gamma \right\rangle\right)
    \end{split}
  \end{equation*}
  where
  \begin{itemize}
  \item $\psi_\gamma= \Pi V(\cdot-\gamma)\varphi_L(\omega)$
  \item $\Pi$ is the orthogonal projector on the orthogonal to
    $\varphi_L(\omega)$.
  \end{itemize}
  Hence, for $(a_\gamma)_\gamma$ complex numbers,
  \begin{equation*}
    \sum_{\gamma,\beta}\partial^2_{\omega_\gamma\omega_\beta}
    E_{-,L}(\omega)a_\gamma\overline{a_\beta}
    =-2\text{Re} \left(\left\langle(H_{\omega,L}^N-E_{-,L}(\omega))^{-1}
        \Pi u_a ,\Pi u_a\right\rangle\right)
  \end{equation*}
  where $u_a=(\sum_\gamma a_\gamma V(\cdot-\gamma))\varphi_L(\omega)$.
  Note that, as $V$ is not trivial, the assumption $E_-(a)=E_-(b)$
  implies that $V$ changes sign, i.e., there exists $x_+\not=x_-$ such
  that $V(x_-)\cdot V(x_+)<0$. Now, the vector $\Pi u_a$ vanishes if and
  only if $u_a$ is colinear to $\varphi_L(\omega)$ which cannot happen
  as $V$ is not constant and $\varphi_L(\omega)$ does not vanish on
  open sets by the unique continuation principle. On the other hand,
  $E_{-,L}(\omega)$ being a simple eigenvalue associated to
  $\varphi_L(\omega)$,
  $\Pi(H_{\omega,L}^N-E_{-,L}(\omega))^{-1}\Pi\geq c\,\Pi$ for some
  $c>0$. So the Hessian of $\omega\mapsto E_{-,L}(\omega)$ is positive
  definite. This completes the proof of Lemma~\ref{le:2}.
\end{proof}
We now turn to the proof of Theorem~\ref{thr:3}.  As the random
variables are not Bernoulli distributed, i.e.,
$\mathbb{P}(\omega_0=a)+\mathbb{P}(\omega_0=b)<1$, we can fix
$\varepsilon>0$ sufficiently small such that
$\mathbb{P}(\omega_0\in[a,a+\varepsilon))
+\mathbb{P}(\omega_0\in(b-\varepsilon,b])<1$. By strict concavity of
$E_-(\lambda)$, one has $E_-(a)<E_-(a+\varepsilon)$ and
$E_-(b)<E_-(b-\varepsilon)$.\\
In Section~\ref{sec:lower-bounds-ground}, we have proved
\begin{Le}
  \label{le:14}
  Assume $E_-(a)=E_-(b)$. There exists $C>0$ such, for all $L\geq0$,
  if $\omega\in\{a,b,a+\varepsilon,b-\varepsilon\}^{Z_L}$ is such
  that
  \begin{description}
  \item[(P)] for all $p\in\Lambda$, there exists $\ell\in \{0,\dots, L-1\}$ 
  such that
  \begin{equation*}
    \omega_{(p,\ell)}\in\{a+\varepsilon,b-\varepsilon\}
  \end{equation*}

  \end{description}
  then
  \begin{equation}
    \label{eq:15}
    E_{-,L}(\omega)\geq E_-(a)+\frac1{CL^2}.
  \end{equation}
\end{Le}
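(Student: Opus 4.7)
The plan is to reduce the statement to the framework of Section~\ref{sec:lower-bounds-ground} and the strip argument of Section~\ref{sec:proof-main-theorems}. First, by subtracting the constant $E_-(a)$ from $V_0$, we may assume $E_-(a)=E_-(b)=0$; proving~\eqref{eq:15} then amounts to proving $E_{-,L}(\omega)\geq 1/(CL^2)$. Under Assumption~B the single site potential $V$ is reflection symmetric in every coordinate hyperplane, so relabeling
\begin{equation*}
v_1=aV,\quad v_2=bV,\quad v_3=(a+\varepsilon)V,\quad v_4=(b-\varepsilon)V,
\end{equation*}
gives four admissible single site potentials with $M=4$ and $m=2$, satisfying Assumption~A. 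Indeed $\inf\s(H_1^N)=\inf\s(H_2^N)=0$ by construction, while strict concavity of $\lambda\mapsto E_-(\lambda)$ together with $E_-(a)=E_-(b)$ and our choice of $\varepsilon$ forces $\inf\s(H_k^N)=E_-(a+\varepsilon)>0$ and $\inf\s(H_4^N)=E_-(b-\varepsilon)>0$.

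The key auxiliary observation is that $v_1\underset{j}{\sim}v_2$ for every $j\in\{1,\dots,d\}$, and trivially $v_k\underset{j}{\sim}v_k$ for $k=1,2$. This is established by the gluing argument underlying Lemma~2.4: normalize the positive Neumann ground states $\Gamma_a$ and $\Gamma_b$ on $C_1(0)$ so that their traces agree on the shared face $\{x_j=1\}\sim\{x_j=0\}$ after reflection symmetry (both are symmetric about $\{x_j=1/2\}$ by Assumption~B). Because both $\Gamma_a,\Gamma_b$ satisfy Neumann conditions at that face and correspond to eigenvalue $0$, the glued function is a positive $H^1$ function on $U_j$, solves $H^N_{12(j)}\Phi=0$ weakly with Neumann boundary data on $\partial U_j$, and is therefore the ground state of $H^N_{12(j)}$, proving $v_1\underset{j}{\sim}v_2$.

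With these identifications, condition $(P)$ in the Lemma is exactly condition $(a)_p$ of Section~\ref{sec:proof-main-theorems}: indeed since all good potentials are pairwise $\underset{d}{\sim}$-equivalent, the only way for a strip $\Sigma_p$ to fail $(b)_p$ is to contain a site with $\omega_{(p,\ell)}\in\{a+\varepsilon,b-\varepsilon\}$, i.e., a bad potential. I would therefore run verbatim the chain of inequalities~\eqref{Ineq1}--\eqref{Ineq4}: Neumann bracket into strips $\Sigma_p$, extend each by reflection across $\{x_d=0\}$ to obtain $\hat\Sigma_p$ with an even-reflected potential, decompose $\hat\Sigma_p$ into sub-intervals $\Xi_j$ each starting with a bad cube (case~(i) of the dichotomy; case~(ii) is vacuous here since all good potentials are equivalent), and apply Theorem~\ref{th:4} with $a=1$ on each $\Xi_j$. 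Since the length of each $\Xi_j$ is at most $L$, this yields $\inf\s(P_j)\geq 1/(C(L-1)^2)$, hence $\inf\s(\tilde H_p^N)\geq \inf\s(\hat H_p^N)\geq c_3/L^2$ for every $p\in\Lambda$, and finally $E_{-,L}(\omega)=\inf\s(H^N_{\omega,L})\geq c_3/L^2$, which is the required bound once the shift is undone.

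The main conceptual step—and the reason the argument works despite the loss of the monotonicity and concavity tricks of~\cite{KN1}—is the equivalence $aV\underset{j}{\sim}bV$. Everything else is a direct transcription of the machinery of Sections~\ref{sec:lower-bounds-ground} and~\ref{sec:proof-main-theorems}; the only mild bookkeeping is verifying that Assumption~A's hypothesis of a finite discrete distribution of potentials still applies, which is clear since $\omega$ takes values in the four-point set $\{a,b,a+\varepsilon,b-\varepsilon\}$ in the regime to which condition $(P)$ refers.
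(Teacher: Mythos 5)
Your overall reduction is the right one and matches the paper's implicit argument: relabel $v_1=aV$, $v_2=bV$, $v_3=(a+\varepsilon)V$, $v_4=(b-\varepsilon)V$, note $m=2$ (by strict concavity, $E_-(a+\varepsilon)>0$ and $E_-(b-\varepsilon)>0$), observe that condition~(P) forces $(a)_p$ for every strip, and then run the Neumann bracketing / reflection / decomposition into $\Xi_j$ of Section~\ref{sec:proof-main-theorems} together with Theorem~\ref{th:4}.

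However, your ``key auxiliary observation'' that $aV\underset{j}{\sim}bV$ is both false in general and not needed. The gluing argument you give is circular: you write ``normalize $\Gamma_a$ and $\Gamma_b$ so that their traces agree on the shared face,'' but there is no such normalization unless the traces are already scalar multiples of one another. Reflection symmetry of each $\Gamma_a,\Gamma_b$ about $\{x_j=1/2\}$ makes each ground state's two opposite face traces agree with \emph{itself}, not with the \emph{other} ground state's trace. Whether $\Gamma_a\lceil_{\{x_j=0\}}$ and $\Gamma_b\lceil_{\{x_j=0\}}$ are proportional is \emph{precisely} the content of $aV\underset{j}{\sim}bV$ (this is Lemma~2.4, which goes in the opposite direction). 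Indeed the paper remarks, right after~\eqref{eq:14}, that $v_k\underset{j}{\not\sim}v_\ell$ generically for $k\neq\ell$, and the discussion at the start of Section~\ref{sec:alloy-type-model} explicitly uses the fact that for Bernoulli randomness with $E_-(a)=E_-(b)$, either $aV\underset{j}{\sim}bV$ or $aV\underset{j}{\not\sim}bV$ can occur — Theorem~\ref{th:3} splits exactly along that line, so your claim would make case~(i) of that theorem vacuous for Bernoulli distributions.

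Fortunately the claim is dispensable. Condition~(P) already guarantees that every strip $\Sigma_p$ contains a site with $\omega_{(p,\ell)}\in\{a+\varepsilon,b-\varepsilon\}$, i.e.\ $\omega((p,\ell))>m$, so $(a)_p$ holds through the \emph{first} clause regardless of whether $v_1\underset{d}{\sim}v_2$. The decomposition of $\hat\Sigma_p$ into segments $\Xi_j$ then proceeds as in Section~\ref{sec:proof-main-theorems}: if $v_1\underset{d}{\not\sim}v_2$, some $\Xi_j$ will be of type~(ii) instead of type~(i), but Theorem~\ref{th:4} yields the same $\nu^{-2}\geq (L-2)^{-2}$ lower bound in either case. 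Dropping the incorrect equivalence claim (and the assertion that case~(ii) is vacuous), the rest of your reduction is a correct transcription of the paper's proof.
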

\noindent To complete the proof of Theorem~\ref{thr:3}, we first
extend lemma~\ref{le:14} using the concavity of the ground state
energy to
\begin{Le}
  \label{le:15}
  Assume $E_-(a)=E_-(b)$. 
  There exists $C>0$ such, for all $L\geq0$, if $\omega\in\Omega_L$ is
  such that
  \begin{description}
  \item[(P')] for all $p\in\Lambda$, there exists $\ell\in \{0,\dots, L-1\}$ 
  such that
  \begin{equation*}
    \omega_{(p,\ell)}\in[a+\varepsilon,b-\varepsilon]
  \end{equation*}

  \end{description}
  then~\eqref{eq:15} holds (with the same constant as in
  Lemma~\ref{le:14}).
\end{Le}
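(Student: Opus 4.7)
The plan is to deduce Lemma~\ref{le:15} from Lemma~\ref{le:14} by exploiting the concavity of $\omega\mapsto E_{-,L}(\omega)$ established in Lemma~\ref{le:2}. The idea is to write an arbitrary $\omega\in\Omega_L$ satisfying (P') as an average of random configurations supported in the discrete set $\{a,a+\varepsilon,b-\varepsilon,b\}^{Z_L}$, each of which satisfies (P), and then apply Jensen's inequality for concave functions to transport the conclusion of Lemma~\ref{le:14} to $\omega$.

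Concretely, for each coordinate $\gamma\in Z_L$ I would decompose $\omega_\gamma\in[a,b]$ as a convex combination of two values chosen from $\{a,a+\varepsilon,b-\varepsilon,b\}$: if $\omega_\gamma\in[a,a+\varepsilon]$ use the pair $\{a,a+\varepsilon\}$; if $\omega_\gamma\in[b-\varepsilon,b]$ use $\{b-\varepsilon,b\}$; and crucially, if $\omega_\gamma\in[a+\varepsilon,b-\varepsilon]$ use the pair $\{a+\varepsilon,b-\varepsilon\}$. Let $\tilde\omega=(\tilde\omega_\gamma)_{\gamma\in Z_L}$ denote the random configuration with independent coordinates, each drawn from its assigned pair with the probabilities adjusted to yield $\mathbb{E}(\tilde\omega_\gamma)=\omega_\gamma$. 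Then $\mathbb{E}(\tilde\omega)=\omega$ and $\tilde\omega\in\{a,a+\varepsilon,b-\varepsilon,b\}^{Z_L}$ almost surely; moreover $\tilde\omega$ satisfies (P) almost surely, because for each strip $p\in\Lambda$ the index $\ell_p$ provided by (P') has $\omega_{(p,\ell_p)}\in[a+\varepsilon,b-\varepsilon]$, and the decomposition at that coordinate forces $\tilde\omega_{(p,\ell_p)}\in\{a+\varepsilon,b-\varepsilon\}$ deterministically.

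Lemma~\ref{le:14} then yields $E_{-,L}(\tilde\omega)\geq E_-(a)+1/(CL^2)$ almost surely, with the same constant $C$. Combining this with the concavity of $E_{-,L}$ on $[a,b]^{Z_L}$ (Lemma~\ref{le:2}) and Jensen's inequality gives
\begin{equation*}
E_{-,L}(\omega)=E_{-,L}\bigl(\mathbb{E}(\tilde\omega)\bigr)\geq\mathbb{E}\bigl(E_{-,L}(\tilde\omega)\bigr)\geq E_-(a)+\frac{1}{CL^2},
\end{equation*}
which is precisely \eqref{eq:15}. The only step requiring any real thought is the choice made in the intermediate case: pairing $a+\varepsilon$ with $b-\varepsilon$, rather than either with a boundary value, is what guarantees that the witness coordinate from (P') contributes a value in $\{a+\varepsilon,b-\varepsilon\}$ in every realization of $\tilde\omega$, so that (P) truly holds with probability one and Lemma~\ref{le:14} can be invoked pathwise.
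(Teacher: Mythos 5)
Your proof is correct and rests on the same core idea as the paper's: use the concavity of $\omega\mapsto E_{-,L}(\omega)$ (Lemma~\ref{le:2}) to reduce the general case to the discrete one covered by Lemma~\ref{le:14}. The difference is purely organizational. The paper carries out the reduction in two nested steps: first it writes the coordinates lying in $[a+\varepsilon,b-\varepsilon]$ as convex combinations of $\{a+\varepsilon,b-\varepsilon\}$, applies Lemma~\ref{le:14} and concavity, and then in a second pass writes the remaining coordinates (in $[a,a+\varepsilon)\cup(b-\varepsilon,b]$) as convex combinations of $\{a,b\}$ and applies concavity again. You instead decompose every coordinate at once into its assigned pair from $\{a,a+\varepsilon,b-\varepsilon,b\}$, package the resulting finite convex combination as a product-measure random variable $\tilde\omega$ with $\mathbb{E}(\tilde\omega)=\omega$, check that every realization satisfies (P), and invoke Jensen's inequality for concave functions. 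This is a clean single-pass rephrasing of the same argument; the key observation you correctly isolate --- that the interior coordinate must be split between $a+\varepsilon$ and $b-\varepsilon$ so that (P) holds for every realization --- is also exactly what drives the paper's first step, where $K(\omega)=\{a+\varepsilon,b-\varepsilon\}^{\Gamma(\omega)}$. Nothing is gained or lost in strength; your version simply avoids the double induction at the cost of introducing a probabilistic bookkeeping device that is equivalent to the paper's explicit convex combinations.
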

\noindent Let us postpone the proof of this result to complete that of
Theorem~\ref{thr:3}. Pick $E> E_-(a)=E_-(b)$. We use~\eqref{eq:7} and
pick $L=c(E-E_-(a))^{1/2}$.  Pick $c>0$ sufficiently small that
$Cc^2<1$.  Then, Lemma~\ref{le:14} tells us that, if
$\omega\in[a,b]^{Z_L}$ satisfies (P'), then $E_-(\omega)> E$. So,
the set $\Omega_L(E):=\{\omega\in\Omega_L;\ E_-(\omega)> E\}$
satisfies
\begin{equation*}
  \Omega_L\setminus\Omega_L(E)\subset \{\omega\in\Omega_L;\
  \exists p\in  \Lambda,\text{ s.t. }\forall\ell, \ 
  \omega_{(p,\ell)}\in[a,a+\varepsilon)\cup(b-\varepsilon,b]\}.
\end{equation*}
Hence,
\begin{equation*}
  \begin{split}
    \mathbb{P}(\Omega_L\setminus\Omega_L(E))&\leq \sum_{p\in \Lambda}
    \mathbb{P}(\{\omega_{(p,\ell)}\in[a,a+\varepsilon)\cup(b-\varepsilon,b]
    \text{ for }\forall\ell\})\\
    &=L^{d-1} \bigpare{\mathbb{P}(\omega_0\in[a,a+\varepsilon))
+\mathbb{P}(\omega_0\in(b-\varepsilon,b])}^{L}
  \end{split}
\end{equation*}
This yields the announced exponential decay and completes the proof of
Theorem~\ref{thr:3}.\qed
\begin{proof}[Proof of Lemma~\ref{le:15}]
  We will proceed in two steps. First, we prove that, if $\omega$
  satisfies (P') and all its coordinates that are not in
  $[a+\varepsilon,b-\varepsilon]$ are either equal to $a$ or to $b$,
  then~\eqref{eq:15} holds (with the same constant as in
  Lemma~\ref{le:14}). This comes from the concavity of the ground
  state and the fact that any such point is a convex combination of
  points satisfying (P). Indeed, take such a point $\omega$ and let
  $\Gamma(\omega)$ be the set of coordinates such that
  $\omega_\gamma\in[a+\varepsilon,b-\varepsilon]$. Define
  $K(\omega)=\{a+\varepsilon,b-\varepsilon\}^{\Gamma(\omega)}$. Then,
  there exists a convex combination $(\mu_\eta)_{\eta\in K(\omega)}$
  such that
  \begin{equation*}
     (\omega_\gamma)_{\gamma\in\Gamma(\omega)}=\sum_{\eta\in
       K(\omega)}\mu_\eta \eta,\quad\quad \sum_{\eta\in
       K(\omega)}\mu_\eta=1,\quad \mu_\eta\geq0.
  \end{equation*}
  Hence,
  \begin{equation*}
    \omega=\sum_{\eta\in K(\omega)}\mu_\eta \tilde\eta
  \text{ where }
    (\tilde\eta)_\gamma=
    \begin{cases}
      \eta_\gamma&\text{ if
      }\gamma\in\Gamma(\omega),\\\omega_\gamma&\text{ if
      }\gamma\not\in\Gamma(\omega).
    \end{cases}
  \end{equation*}
  That $\omega$ satisfies~\eqref{eq:15} then follows from the
  concavity of $\omega\mapsto E_{-,L}(\omega)$, that is
  Lemma~\ref{le:2}, and from Lemma~\ref{le:14}. \\
  To complete the proof of Lemma~\ref{le:15}, it suffices to show that
  a point $\omega$ satisfying (P') can be written a convex combination
  of points of the type defined above. This is done as above. Indeed,
  pick $\omega$ satisfying (P'). Define
  $L(\omega)=\{a,b\}^{(Z_L\setminus\Gamma(\omega))}$. Then, there
  exists a convex combination $(\mu_\eta)_{\eta\in L(\omega)}$ such
  that
  \begin{equation*}
     (\omega_\gamma)_{\gamma\in(Z_L\setminus\Gamma(\omega))}
     =\sum_{\eta\in L(\omega)}\mu_\eta \eta,\quad \sum_{\eta\in
       L(\omega)}\mu_\eta=1,\quad \mu_\eta\geq0.
  \end{equation*}
  Hence,
  \begin{equation*}
    \omega=\sum_{\eta\in L(\omega)}\mu_\eta \tilde\eta
    \text{ where }
    (\tilde\eta)_\gamma=
    \begin{cases}
      \eta_\gamma&\text{ if
      }\gamma\not\in\Gamma(\omega),\\\omega_\gamma&\text{ if
      }\gamma\in\Gamma(\omega).
    \end{cases}
  \end{equation*}
  That $\omega$ satisfies~\eqref{eq:15} then follows from the
  concavity of $\omega\mapsto E_{-,L}(\omega)$ and from the first
  step. This completes the proof of Lemma~\ref{le:15}.
\end{proof}
%

%
%

\end{document}